\newif\ifready\readyfalse
\newif\ificalp\icalpfalse
\newif\iftpdp\tpdpfalse
\pgfplotsset{
    compat=1.3,
    legend image code/.code={
        \draw [#1] (0cm,-0.1cm) rectangle (0.6cm,0.1cm);
    },
}
\setlist{noitemsep,topsep=0pt,parsep=0pt,partopsep=0pt}
\theoremstyle{plain}
\newtheorem{theorem}{Theorem}[section]
\newtheorem{definition}[theorem]{Definition}
\newcommand{\defn}[1]{\textbf{\textit{#1}}}
\newtheorem{thm}{Theorem}[section]
\newtheorem{lem}[thm]{Lemma}
\newtheorem{defnt}[thm]{Definition}
\newtheorem{clm}[thm]{Claim}
\crefname{thm}{Theorem}{Theorems}
\crefname{lem}{Lemma}{Lemmas}
\crefname{def}{Definition}{Definitions}
\Crefname{clm}{Claim}{Claims}
\crefname{theorem}{Theorem}{Theorems}
\Crefname{lemma}{Lemma}{Lemmas}
\Crefname{claim}{Claim}{Claims}
\Crefname{observation}{Observation}{Observations}
\Crefname{algorithm}{Algorithm}{Algorithms}
\Crefname{myalgctr}{Algorithm}{Algorithms}
\Crefname{challenge}{Challenge}{Challenges}
\algrenewcommand\algorithmicindent{1em}%
\newcommand{\alg}{\mathcal{A}}
\newcommand{\var}[1]{\mathrm{Var} \Big[ #1 \Big]}
\newcommand{\V}{\mathrm{Var}}
\newcommand{\E}[0]{\ensuremath{\mathbb{E}}}
\definecolor{mygreen}{RGB}{20,140,80}
\definecolor{linkcolor}{RGB}{0,0,230}
\definecolor{mylightgray}{RGB}{230,230,230}
\definecolor{verylightgray}{RGB}{245,245,245}
\newcommand{\etal}[0]{et al.}
\newcounter{myalgctr}
\newtcolorbox{OuterBox}[1][]{%
    breakable,
    enhanced,
    frame hidden,
    interior hidden,
    left=-5pt,
    right=-5pt,
    top=-5pt,
    float=p,
    boxsep=0pt,
    arc=0pt
#1}%
\newtcolorbox{InnerBox}[1][]{%
    enforce breakable,
    enhanced,
    colback=gray,
    colframe=white,
#1}%
\newenvironment{tbox}{
\vspace{0.2cm}
\begin{tcolorbox}[width=\columnwidth,
                  enhanced,
                  boxsep=2pt,
                  left=1pt,
                  right=1pt,
                  top=4pt,
                  boxrule=1pt,
                  arc=0pt,
                  colback=white,
                  colframe=black,
	              breakable
                  ]%
}{
\end{tcolorbox}
}
\newcommand{\tboxhrule}[0]{\vspace{0.1cm} {\color{black} \hrule} \vspace{0.2cm}}
\newenvironment{titledtbox}[1]{\begin{tbox}#1 \tboxhrule}{\end{tbox}}
\newcommand{\drawFigTwo}{1}
\newcommand{\Adj}{A}
\newcommand{\ledp}{LEDP\xspace}
\newcommand{\naturals}{\mathbb{N}}
\newcommand{\eps}{\varepsilon}
\newcommand{\expect}{\mathbb{E}}
\newcommand{\geom}{\mathsf{Geom}}
\newcommand{\rr}{\mathsf{RandomizedResponse}}
\newcommand{\logfactor}{\log^2 n}
\newcommand{\df}{\sens f}
\newcommand{\sens}{\Delta}
\newcommand{\draw}{\geom(\exp(\eps/\logfactor))}
    \newcommand{\qqnote}[1]{\footnote{\color{magenta} Quanquan: #1}}
    \newcommand{\talya}[1]{{\color{ForestGreen} #1}}
    \newcommand{\told}[1]{\sout{#1}}
    \newcommand{\snote}[1]{\footnote{\color{blue} Sofya: #1}}
    \newcommand{\anote}[1]{\footnote{\color{orange} Adam: #1}}
    \newcommand{\qqnote}[1]{}
    \newcommand{\talya}[1]{{#1}}
    \newcommand{\told}[1]{}
    \newcommand{\snote}[1]{}
    \newcommand{\anote}[1]{}
\newcommand{\cfours}{C_4}
\newcommand{\That}{\widehat{T}}
\newcommand{\domain}{\mathcal{D}}
\newcommand{\reals}{\mathbb{R}}
\newcommand{\norm}[1]{\left\Vert#1\right\Vert}
\newcommand{\dmax}{d_{max}}
\newcommand{\range}{Range}
\newcommand{\rangeout}{\mathcal{Y}}
\newcommand{\cB}{\mathcal{B}}
\newcommand{\cC}{\mathcal{C}}
\newcommand{\cP}{\mathcal{P}}
\newcommand{\adj}{a}
\newcommand{\adjb}{\mathbf{a}}
\newcommand{\indicator}{\mathds{1}}
\newcommand{\paren}[1]{{\left( {#1} \right)}}
\newcommand{\bit}{\{0,1\}}
\tikzset{dotted pattern/.style args={#1}{
		postaction=decorate,
		decoration={
			markings,
			mark=between positions 0.25 and 0.75 step 0.25 with {
				\fill[radius=#1] (0,0) circle;
			}
		}
	},
	dotted pattern/.default={1pt},
}
\newcommand{\interLB}{

	\begin{scope}[xscale=7, yscale=3.5]
				\foreach \x\i\l in {0/1/1,.25/2/2,.5/3,1.15/4/n} {
					\node[draw,circle,inner sep=0.1cm, fill=black, label={above:$v_{\l}$}
					] (t\i) at (\x,1) {};
				}
			
			\foreach \x\i\l\r in {-.1/1/1/2,.3/2/3/4, 1/3/2n-1/2n} {
				\node[draw,circle,inner sep=0.1cm, fill=blue, label={below:$\l$}
				] (b\i) at (\x,0) {};
				\node[draw,circle,inner sep=0.1cm, fill=blue, label={below:$\r$}
				] (b'\i) at (\x+.2,0) {};
				\draw[dotted] (b\i) -- (b'\i); 
			}
		\foreach \i in {1,2,3} {
			\foreach \j in {1,2,3,4} {
				\draw[very thin, black] (b\i) -- (t\j);
				\draw[very thin, black] (b'\i) -- (t\j);
			}
		}
		\path[dotted pattern] (.65,1) -- (1,1);
		\path[dotted pattern] (.7,0) -- (.85,0);	
		\node[] (V1) at (-.3,1.05) {\large $V_1$};
		\node[] (V1) at (-.3,0.05) {\large $V_2$};
				
			\end{scope}
}
\title{Triangle Counting with Local Edge  Differential Privacy}
\author{Talya Eden\thanks{Bar Ilan University, talyaa01@gmail.com}, Quanquan C. Liu\thanks{Northwestern University, quanquan@northwestern.edu}, Sofya Raskhodnikova\thanks{Boston University, \{sofya, ads22\}@bu.edu}, Adam Smith\footnotemark[3]}
\author{%
    \begin{tabular}{cc}	
        \begin{tabular}{c}
            Talya Eden\\
            Bar-Ilan University\\
            \texttt{talyaa01@gmail.com}
            \orcidlink{https://orcid.org/0000-0001-8470-9508}
        \end{tabular}
        & 	
        \begin{tabular}{c}
            Quanquan C. Liu\\
            Yale University\\
            \texttt{quanquan.liu@yale.edu}
            \orcidlink{https://orcid.org/0000-0003-1230-2754}
        \end{tabular}
        \\
        \\
        \begin{tabular}{c}
            Sofya Raskhodnikova\\
            Boston University\\
            \texttt{sofya@bu.edu}
            \orcidlink{https://orcid.org/0000-0002-4902-050X}
        \end{tabular}
         & 
         \begin{tabular}{c}
            Adam Smith\\
            Boston University\\
            \texttt{ads22@bu.edu}
            \orcidlink{https://orcid.org/0000-0001-9393-1127}
        \end{tabular}
    \end{tabular}
}
\date{}
\begin{document}
\renewcommand{\algorithmicrequire}{\textbf{Input:}}
\renewcommand{\algorithmicensure}{\textbf{Output:}}
\algblock{ParFor}{EndParFor}
\algblock{Input}{EndInput}
\algblock{Output}{EndOutput}
\algblock{ReduceAdd}{EndReduceAdd}

\algnewcommand\algorithmicparfor{\textbf{parfor}}
\algnewcommand\algorithmicinput{\textbf{Input:}}
\algnewcommand\algorithmicoutput{\textbf{Output:}}
\algnewcommand\algorithmicreduceadd{\textbf{ReduceAdd}}
\algnewcommand\algorithmicpardo{\textbf{do}}
\algnewcommand\algorithmicendparfor{\textbf{end\ input}}
\algrenewtext{ParFor}[1]{\algorithmicparfor\ #1\ \algorithmicpardo}
\algrenewtext{Input}[1]{\algorithmicinput\ #1}
\algrenewtext{Output}[1]{\algorithmicoutput\ #1}
\algrenewtext{ReduceAdd}[2]{#1 $\leftarrow$ \algorithmicreduceadd(#2)}
\algtext*{EndInput}\input
\algtext*{EndOutput}
\algtext*{EndIf}
\algtext*{EndFor}
\algtext*{EndWhile}
\algtext*{EndParFor}
\algtext*{EndReduceAdd}

\maketitle
\sloppy

\begin{abstract}
Many deployments of differential privacy in industry are in the local model, where each party releases its private information via a differentially private randomizer. 
We study triangle counting in the local model with edge differential privacy (that, intuitively, requires that the outputs of the algorithm on graphs that differ in one edge be  indistinguishable). In this model, each party's local view consists of the adjacency list of one vertex. We investigate both noninteractive and interactive variants of the model.

In the noninteractive model, we prove that additive $\Omega(n^2)$ error is necessary {for sufficiently small constant $\eps$, where $n$ is the number of nodes and $\eps$ is the privacy parameter}. This lower bound is our main technical contribution. It uses a reconstruction attack with a new class of linear queries and a novel mix-and-match strategy of running the local randomizers with different completions of their adjacency lists. It matches the additive error of the algorithm based on Randomized Response, proposed by Imola, Murakami and Chaudhuri (USENIX2021) and analyzed by Imola, Murakami and Chaudhuri (CCS2022) for constant $\eps$.
We use a different postprocessing of Randomized Response and provide tight bounds on the variance of the resulting algorithm.

In the interactive setting, we prove a lower bound of $\Omega(n^{3/2}/\eps)$ on the additive error for $\eps\leq 1$. Previously, 
no hardness results were known for interactive, edge-private algorithms in the local model, except for those that follow trivially from the results for the central model. Our work significantly improves on the state of the art in differentially private 
graph analysis
in the local model.
\end{abstract}

\section{Introduction}\label{sec:intro}
Triangle counting is a fundamental primitive in graph analysis, used in numerous applications and widely studied in different computational models~\cite{pagh2012colorful,ParkSKP14,ELRS15,McGregorVV16,KallaugherP17,AKK19,MPC-small,McGregorV20,chen2021triangle}.
Statistics based on triangle counts reveal important structural information about networks (as discussed, e.g., in \cite{PDBL16,farkas2001spectra,milo2002network}). They are used to perform many computational tasks on social networks, including  community detection~\cite{palla2005uncovering}, link prediction~\cite{eckmann2002curvature}, and spam filtering~\cite{BecchettiBCG08}.
See~\cite{al2018triangle} for a %
survey on algorithms for and applications of triangle counting.

In applications where a graph (e.g., a social network) holds sensitive information, the algorithm that computes on the graph has to protect personal information, such as friendships between specific individuals. Differential privacy~\cite{DMNS06} has emerged as the standard of rigorous privacy guarantees. See \cite{RaskhodnikovaS16} for a survey of differentially private graph analysis. The most investigated setting of  differential privacy is called the {\em central model}. It implicitly assumes a curator that collects all the data, performs computations on it, and provides data releases. In some situations, however, it might be undesirable to collect all information in one place, for instance, because of trust or liability issues. To address this, the {\em local model} of differential privacy was proposed~\cite{EvfimievskiGS03,DMNS06,kasiviswanathan2011can} and is now used in many industry deployments~\cite{rappor14,BMMR17,CJKL18,2017LearningWP,DKY17}.

In this model, each party releases its private information via  a differentially private randomizer. Then the algorithm processes the information and, in the case of the local \emph{noninteractive} model, outputs the answer. In the case of the local \emph{interactive} model, the algorithm may have multiple rounds where it asks all parties to run different randomizers on their private data. These randomizers can have arbitrary dependencies on previous messages. Differential privacy in the local model is defined with respect to the whole transcript of interactions between the parties and the algorithm.
In the local model applied to graph data, each vertex represents a party. It receives the list of its neighbors as input and applies local randomizers to it. In contrast to the typical datasets, where information belongs to individual parties, in the graph setting, each pair of parties (vertices) share the information of whether there is an edge between them.

Differential privacy, intuitively, guarantees that, for any two neighboring datasets, the output distributions of the algorithm are roughly the same. There are two natural notions of neighboring graphs: {\em edge-neighboring} and {\em node-neighboring}. Two graphs are edge-neighboring if they differ in one edge; they are node-neighboring if they differ in one node and its adjacent edges. Edge differential privacy is, in general, easier to attain, but node differential privacy provides stronger %
guarantees. Edge differential privacy was introduced and first applied to triangle counting in~\cite{NissimRS07}. The edge-differentially private algorithm from~\cite{NissimRS07} was generalized and implemented in~\cite{KarwaRSY14}.
The first node-differentially private algorithms appeared in~\cite{BlockiBDS13,KasiviswanathanNRS13,ChenZ13}, and all three of these articles considered the problem of triangle counting. 
Edge differential privacy in the local model has been studied in~\cite{Qin17,gao2018local,sunAnalyzing2019,Ye0A0X20,ImolaMC21,ImolaMC22,DLRSSY22,hillebrand2023communication,liu2024edge}
with most of the listed articles focusing on triangle counting.

In this work, we investigate edge differentially private algorithms for estimating the number of triangles in a graph in the local model. Our goal is to understand the additive error achievable by such algorithms  both in the noninteractive and in the interactive model. 
For the noninteractive model, we   provide  upper and lower bounds on additive error.   Our bounds are tight in terms of $n,$ the number of nodes in the input graph.  
For the interactive model, we provide the first lower bound specific to 
local, edge differentially  private (LEDP) algorithms. There are easy lower bounds for the central model (based on global sensitivity), which a fortiori
apply to the local model. 
However, no lower bounds specific to the local model were previously known  for any graph problem, even for 2-round algorithms.
Together, our results improve our understanding of both noninteractive and interactive LEDP algorithms.

\subsection{Results}
Our results and comparison to previous work are summarized in~\cref{tab:results}.

\begin{table}[htb!]
\renewcommand\arraystretch{2}
    \centering
    \begin{tabular}{| c | c || c | c |c|c|}
         \hline
         \multicolumn{2}{|c||}{Model} &  \multicolumn{2}{|c||}{Previous Results }& \multicolumn{2}{|c||}{Our Results}  \\
         \hline
         \hline
          \multirow{2}*{\shortstack{Non-\\interactive}} & Lower Bound &   $\Omega(n^{3/2})$& \cite{ImolaMC21} & $\Omega(n^2)$ & \ificalp
          Thm.~\ref{thm:noninteractiv-lb-intro}
          \else
          \cref{thm:noninteractiv-lb-intro}
          \fi\\ \cline{2-6}
          & Upper Bound &  
          $O(n^2)$ (constant $\eps$) & \cite{IMCshuffler22} & $O\Big(\frac{\sqrt{C_4(G)}}{\eps}+\frac{n^{3/2}}{\eps^3}\Big)$ & 
          \ificalp
          Thm.~\ref{thm:noninteractiv-ub-intro}
          \else
          \cref{thm:noninteractiv-ub-intro}
          \fi\\
          \specialrule{.2em}{.1em}{.1em} 

          \multirow{2}*{Interactive} & Lower Bound &  
          $\Omega(n)$ & easy & $\Omega(\frac{n^{3/2}}\eps)$ & 
          \ificalp
          Thm.~\ref{thm:interactive-lb-intro}
          \else
          \cref{thm:interactive-lb-intro}
          \fi
           \\ \cline{2-6}
          & Upper Bound & 
          $O\left(\frac{\sqrt{\cfours(G)}}{\eps}+\frac{n^{3/2}}{\eps^2}\right)$
          & \cite{ImolaMC22} & \multicolumn{2}{|c}{} 
          \\
          \cline{1-4}
    \end{tabular}
    \caption{Summary of lower and upper bounds on the additive error for triangle counting in the noninteractive and interactive models. Note that 
    the largest value of $\cfours(G)$ is $\binom{n}{4}=\Theta(n^4).$  
    For ease of comparison, the results of \cite{ImolaMC21} and \cite{ImolaMC22} are stated for graphs with $d_{max}=\Theta(n)$. 
    }
    \label{tab:results}
\end{table}

\subsubsection{Lower Bound for the Noninteractive Local Model} Our main technical contribution is a lower bound in the noninteractive setting. It uses a reconstruction attack (for the central model) with a new class of linear queries and a novel mix-and-match strategy of running local randomizers with different completions of their adjacency lists. While reconstruction attacks are a powerful tool in proving lower bounds in the central model of differential privacy, they have not been used to obtain bounds in the local model. Previous lower bounds in the local model  are based on quite different techniques---typically, information-theoretic arguments (see, for example,  \cite{kasiviswanathan2011can,beimel2008distributed,DuchiJW13} and many subsequent works).  

\begin{thm}[Noninteractive Lower Bound, informal version]\label{thm:noninteractiv-lb-intro}
Let  $\eps\in(0,1/20)$ and $\delta\geq0$ be a sufficiently small constant. 
There exists a family of graphs such that every noninteractive 
$(\eps,\delta)$-local edge differentially private algorithm 
that gets an $n$-node graph from the family as input and approximates the number of triangles in the graph within additive error at most $\alpha$ (with sufficiently high constant probability) must have $\alpha=\Omega(n^2)$.
\end{thm}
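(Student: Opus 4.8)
I would build a family of $n$-vertex graphs encoding a secret matrix $X\in\{0,1\}^{m\times m}$ with $m=\Theta(n)$: take two independent sets $U_1=\{u_{1,1},\dots,u_{1,m}\}$ and $U_2=\{u_{2,1},\dots,u_{2,m}\}$, join $u_{1,i}\sim u_{2,j}$ exactly when $X_{ij}=1$, and let the remaining $\Theta(n)$ vertices form a pool $W$ (with a few types, as in the figure) of helper vertices attached to $U_1\cup U_2$. By choosing which vertices of $U_1\cup U_2$ the helpers attach to, the triangle count of the resulting graph equals, up to the known multiplicity $|W|=\Theta(n)$ and a lower-order term, a linear form $\langle q,X\rangle=\sum_{ij}q_{ij}X_{ij}$, where $q$ ranges over a rich family of rank-one queries $q=\sigma_1\otimes\sigma_2$ (assembled from a few rectangle queries realized by the helper types). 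So an algorithm that counts triangles accurately in all these graphs answers many linear queries about $X$ accurately, and a Dinur--Nissim-style reconstruction then recovers a $(1-o(1))$ fraction of the $\Theta(n^2)$ bits of $X$ --- contradicting edge differential privacy, since flipping one bit of $X$ is a single edge modification.

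\textbf{From one transcript to many queries: mix-and-match.} The obstacle specific to the local model is that the attacker does not get to re-query: it observes one transcript $Z=(z_v)_v$ of a single honest run of the protocol on $G(X)$ for a uniformly random $X$. Here locality helps: each message $z_v$ is the output of a (public) randomizer applied only to $v$'s adjacency list. To evaluate a query $q$, the attacker regenerates from scratch the messages of the helper vertices by running their randomizers on the completions of their adjacency lists that realize $q$, keeps the other messages, and runs $\alg$ on the spliced transcript. One argues that this spliced transcript is distributed exactly as an honest run of the protocol on the query-graph $G_q$ (also in the family), so $\alg$ returns $|W|\cdot\langle q,X\rangle$ with additive error at most $\alpha$ with good constant probability; repeating with fresh regenerations and taking a median (using that $\alg$ succeeds with sufficiently high constant probability) answers all $\mathrm{poly}(n)$ queries simultaneously with large constant probability, at per-query error $E=O(\alpha/n)$ after dividing by $|W|$.

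\textbf{Reconstruction and the privacy contradiction.} Feed these noisy linear measurements into reconstruction: find any $\widehat X\in\{0,1\}^{m\times m}$ that is $E$-consistent with all the answers ($X$ itself is one). For any such $\widehat X$, the difference $W=\widehat X-X\in\{-1,0,1\}^{m\times m}$ satisfies $|\sigma_1^\top W\sigma_2|\le 2E$ for many random sign vectors $\sigma_1,\sigma_2$; since this bilinear form has standard deviation $\Theta(\|W\|_F)$, a union bound over candidate $\widehat X$ (hence over the value of $\|W\|_F$), using $\mathrm{poly}(n)$ queries, forces $\|W\|_F=O(E)$, i.e. $\|\widehat X-X\|_0=\|W\|_F^2=O(E^2)$. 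If $\alpha=o(n^2)$ then $E=o(n)$, so $O(E^2)=o(n^2)$ and $\widehat X$ agrees with $X$ on a $(1-o(1))$ fraction of its $\Theta(n^2)$ entries, with constant probability over the honest run. But $\widehat X$ is a post-processing of $Z$, so for each $(i,j)$, applying $(\eps,\delta)$-edge differential privacy to the graphs differing only in $\{u_{1,i},u_{2,j}\}$ bounds $\Pr[\widehat X_{ij}=X_{ij}]$ by $\frac{e^\eps+\delta}{1+e^\eps}$, which for $\eps<1/20$ and $\delta$ a small constant is bounded away from $1$; averaging over coordinates shows no post-processing of $Z$ can be correct on a $(1-o(1))$ fraction of the bits with constant probability. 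This contradiction forces $\alpha=\Omega(n^2)$.

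\textbf{Main obstacle.} The hard part is making the mix-and-match step faithful: a noninteractive LEDP algorithm is promised to be accurate only on transcripts arising from honest runs on actual graphs of the family, yet a local randomizer's message depends on the vertex's \emph{entire} adjacency list. The gadget must therefore be engineered so that each query-encoding modification touches only the helpers' adjacency lists and leaves every reused message valid for $G_q$; otherwise the spliced transcript is honest only for an ``inconsistent'' graph (whose endpoints disagree about some helper edges) and the accuracy guarantee evaporates. A second, quantitative subtlety is that the query family must be genuinely two-dimensional in $X$ (rank-one forms $\sigma_1\otimes\sigma_2$), so that reconstruction runs over all $\Theta(n^2)$ bits at once with error threshold $\Theta(n)$; doing it column-by-column would only reach threshold $\Theta(\sqrt n)$ and reprove the earlier $\Omega(n^{3/2})$ bound rather than $\Omega(n^2)$.
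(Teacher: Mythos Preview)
Your architecture matches the paper's: encode $X$ as a bipartite graph on $U_1\cup U_2$, attach helpers $W$ so that triangle counts realize rank-one (outer-product) linear forms in $X$, prove anti-concentration for $\sigma_1^\top(X-\widehat X)\sigma_2$, and derive a reconstruction contradiction with edge-DP. That part is fine and essentially what the paper does.

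The genuine gap is the step you yourself flag as the main obstacle but then do not actually resolve. You propose to ``engineer the gadget so that each query-encoding modification touches only the helpers' adjacency lists and leaves every reused message valid for $G_q$.'' For triangle counts this is impossible in an undirected graph: a helper $w$ contributes to the term $X_{ij}$ only if $w$ is adjacent to both $u_{1,i}$ and $u_{2,j}$, so varying the query \emph{necessarily} changes the adjacency lists of the $U$-vertices, not just those of $W$. Hence your spliced transcript (fixed $U$-messages from one honest run, regenerated $W$-messages) is \emph{not} distributed as an honest run on $G_q$, the accuracy hypothesis on $\alg$ does not apply to it, and the median trick cannot repair this because the conditional distribution given the frozen $U$-messages carries no accuracy guarantee whatsoever. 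The paper's fix is the missing idea: restrict to queries in which every $U$-vertex is either adjacent to all of $W$ or to none of $W$ (exactly the rank-one structure), so each secret vertex $v\in U_1\cup U_2$ has only two possible adjacency lists across all queries, namely $\Gamma_X(v)$ and $\Gamma_X(v)\cup W$. The attacker runs $R_v$ \emph{twice}, once on each list, stores both outputs $r_0(v),r_1(v)$, and for each query selects the appropriate one; the resulting spliced transcript is then a bona fide fresh run on $G_q$. The privacy cost is a single factor of $2$ by composition, yielding a $(2\eps,2\delta)$-DP central-model algorithm that answers all the outer-product queries, after which your reconstruction and contradiction go through.
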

Our lower bound holds for all small $\delta\geq 0$ (the case referred to as ``approximate'' differential privacy). Observe that such lower bounds are stronger than those for $\delta=0$ (the case referred to as ``pure'' differential privacy), because they include $\delta=0$ as a special case. 
The only previously known lower bound, due to Imola et al.~\cite{ImolaMC21}, showed that noninteractive algorithms must have error 
$\Omega(\sqrt n\cdot d_{max})$.

To prove the lower bound in~\cref{thm:noninteractiv-lb-intro}, we develop a novel mix-and-match technique for noninteractive local model. For a technical overview of the proof of \cref{thm:noninteractiv-lb-intro}, see \cref{sec:high-level-lb-intro}.

Our lower bound matches the upper bound of   $O(n^2)$   proved by \cite[Theorem G.3]{IMCshuffler22} (for constant $\eps$) for an algorithm based on randomized response. In this work, we give a simpler variant of the algorithm and a more refined analysis, which works for all $\eps.$

\subsubsection{Tight Analysis of Randomized Response}

The most natural algorithm for the noninteractive model is Randomized Response, which dates back to Warner~\cite{Warner65}. In this algorithm, each bit is flipped with probability $\frac 1{e^{\eps}+1},$ where $\eps$ is the privacy parameter. In the case of graphs, each bit represents the presence or absence of an edge. An algorithm based on Randomized Response for triangle counting was first analyzed by \cite{ImolaMC21} 
for the special case of Erd\H{o}s-R\'enyi graphs, and then \cite{IMCshuffler22} proved that this algorithm has $O(n^2)$ additive error for constant $\eps$ for general graphs. These works first compute the number of triangles and other induced subgraphs with three vertices as though the noisy edges are real edges and then appropriately adjust the estimate using these counts to make it unbiased.

We use a different postprocessing of Randomized Response {than in \cite{ImolaMC21,IMCshuffler22}}. %
 We %
rescale the noisy edges right away, so we need not compute counts for graphs other than triangles, which makes the analysis much simpler.
{Note that the output distribution of our algorithm is the same as the output distribution of the algorithm in \cite{ImolaMC21,IMCshuffler22}, even though our algorithm performs a different computation.}
We obtain tight upper and lower bounds on the variance of the resulting algorithm that hold for all $\eps.$
Our bounds are more refined, as they are stated in terms of $C_4(G)$, the number of four cycles in the graph.

\begin{thm}[Analysis of Randomized Response]\label{thm:noninteractiv-ub-intro}
    For all $\eps \in (0,1]$,
    there exists a noninteractive
     $\eps$-LEDP algorithm  based on Randomized Response that
gets an $n$-node graph as input and 
     returns an estimate $\hat{T}$ of the number of triangles in the graph
     that has variance $\Theta\Big(\frac{C_4(G)}{\eps^2}+\frac{n^{3}}{\eps^6}\Big)$. 
   
     In particular, with high constant probability, $\hat{T}$ has additive error $\alpha=O\Big(\frac{\sqrt{C_4(G)}}{\eps}+\frac{n^{3/2}}{\eps^3}\Big).$
\end{thm}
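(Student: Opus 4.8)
The plan is to analyze the following clean instantiation of Randomized Response. Fix an arbitrary linear order on the vertices. Each vertex $u$ reports, for every vertex $v>u$, a bit $\tilde a_{uv}$ equal to the true edge indicator $a_{uv}$ with probability $1-p$ and flipped with probability $p:=\tfrac1{e^{\eps}+1}$, with all flips independent. Since each edge $\{u,v\}$ with $u<v$ affects only the report of $u$ (a one-bit change) and RR with this $p$ is $\eps$-DP for one-bit changes, the whole transcript is $\eps$-LEDP with no composition loss --- this is what makes the ``simpler variant'' simpler. We then rescale immediately: set $b_{uv}:=\frac{\tilde a_{uv}-p}{1-2p}$, so $\ee{b_{uv}}=a_{uv}$ and, crucially, the $b_{uv}$ over the $\binom n2$ pairs are mutually independent. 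The estimator is
\[
\hat T \;:=\; \sum_{\{u,v,w\}\in\binom V3} b_{uv}\,b_{vw}\,b_{uw}.
\]
By independence and unbiasedness of each factor, $\ee{\hat T}=\sum_{\{u,v,w\}}a_{uv}a_{vw}a_{uw}=T(G)$ for every $\eps$. Rescaling up front is exactly the ``different postprocessing'': we never compute counts of other three-vertex subgraphs.

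The heart of the proof is an exact variance computation. Writing $Y_S:=\prod_{e\in S}b_e$ for a triple $S$, the key one-edge identity (using $a_e^2=a_e$) is $\ee{b_e^2}=a_e+\sigma^2$, equivalently $\var{b_e}=\sigma^2$ for all $e$, where $\sigma^2:=\frac{p(1-p)}{(1-2p)^2}$. Then $\var{\hat T}=\sum_S\var{Y_S}+\sum_{S\neq S'}\cov(Y_S,Y_{S'})$. Two distinct triples share $0$ edges (if they share at most one vertex), $1$ edge (if they share two vertices), or --- when $S=S'$ --- $3$ edges; sharing no edge makes $Y_S,Y_{S'}$ functions of disjoint independent $b$'s, so that covariance is exactly $0$. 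For $S=\{u,v,w\}$, $S'=\{u,v,x\}$ sharing only the edge $\{u,v\}$, one computes $Y_SY_{S'}=b_{uv}^2\,b_{vw}b_{uw}b_{vx}b_{ux}$ and hence
\[
\cov(Y_S,Y_{S'})\;=\;\sigma^2\,a_{vw}a_{uw}a_{vx}a_{ux},
\]
i.e.\ $\sigma^2$ times the indicator that $w$-$v$-$x$-$u$-$w$ is a $4$-cycle in $G$ (the shared edge $\{u,v\}$ itself being irrelevant). Each unlabeled $4$-cycle is produced by exactly $4$ ordered pairs $(S,S')$ (two choices of which diagonal plays the role of $\{u,v\}$, times two orderings), so $\sum_{S\neq S',\,|S\cap S'|=2}\cov(Y_S,Y_{S'})=4\sigma^2\,C_4(G)$. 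For the diagonal terms, $\var{Y_S}=\prod_{e\in S}(a_e+\sigma^2)-\prod_{e\in S}a_e=\sum_{\emptyset\neq R\subseteq S}\sigma^{2|R|}\prod_{e\in S\setminus R}a_e$, which on summing over all triples gives $\sigma^2 W(G)+\sigma^4 m(n-2)+\sigma^6\binom n3$, where $W(G)$ counts wedges (paths on three vertices) and $m=|E(G)|$. Thus
\[
\var{\hat T}\;=\;4\sigma^2 C_4(G)+\sigma^2 W(G)+\sigma^4 m(n-2)+\sigma^6\binom n3 .
\]

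Since $W(G),\,m(n-2)=O(n^3)$ while $\binom n3=\Theta(n^3)$ and every term is nonnegative, the lower bound $\var{\hat T}=\Omega(\sigma^2 C_4(G)+\sigma^6 n^3)$ comes from the two explicit terms $4\sigma^2C_4(G)$ and $\sigma^6\binom n3$, and the matching upper bound follows by absorbing $\sigma^2W(G)$ and $\sigma^4m(n-2)$ into $\sigma^6n^3$ (valid since $\sigma=\Omega(1)$ when $\eps=O(1)$). Substituting $\sigma^2=\frac{p(1-p)}{(1-2p)^2}=\Theta(1/\eps^2)$ (for $\eps=O(1)$; for larger $\eps$ the true variance is only smaller, so the stated expression remains a valid upper bound) yields $\var{\hat T}=\Theta(C_4(G)/\eps^2+n^3/\eps^6)$. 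Chebyshev's inequality then converts this into the additive-error statement: $\pp{|\hat T-T(G)|>10\sqrt{\var{\hat T}}}\le\tfrac1{100}$, and $10\sqrt{\var{\hat T}}=O(\sqrt{C_4(G)}/\eps+n^{3/2}/\eps^3)$ via $\sqrt{a+b}\le\sqrt a+\sqrt b$.

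The step I expect to be the main obstacle is the variance identity: correctly classifying the covariance contributions, recognizing the one-shared-edge term as $\sigma^2$ times a $4$-cycle indicator with combinatorial multiplicity exactly $4$, and checking that the lower-order diagonal terms ($\sigma^2W(G)$ and $\sigma^4mn$) are genuinely dominated so that the bound is tight rather than just an upper bound. Privacy, unbiasedness, and the Chebyshev step are routine once the rescaled estimator is set up.
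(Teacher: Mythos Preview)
Your proposal is correct and essentially identical to the paper's proof: same algorithm (each vertex applies $\rr_\eps$ to the upper-triangular part of its adjacency row, rescale each noisy bit to an unbiased edge estimator, then sum products over all vertex triples), same unbiasedness argument via independence, and the same variance decomposition into diagonal $\Theta(\sigma^6 n^3)$ terms plus one-shared-edge covariance terms that yield the $C_4(G)$ contribution. Your write-up is in fact slightly more explicit than the paper's---you track the exact constant $4\sigma^2 C_4(G)$, spell out the lower-order diagonal pieces $\sigma^2 W(G)+\sigma^4 m(n-2)$, and flag the $\eps=O(1)$ regime needed for $\sigma^2=\Theta(1/\eps^2)$---but the route is the same.
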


Note that for  constant $\eps$, both the result of \cite{IMCshuffler22} and \cref{thm:noninteractiv-ub-intro} give an upper bound of $O(n^2)$ on the additive error of the algorithm's estimate. 
Thus, Randomized Response is optimal for graphs that have $\cfours=\Theta(n^4)$ by our lower bound in~\cref{thm:noninteractiv-lb-intro}. Also, observe that Randomized Response achieves pure differential privacy (with $\delta=0$), whereas the lower bound in~\cref{thm:noninteractiv-lb-intro} holds even for approximate differential privacy. Even though allowing $\delta>0$ results in better accuracy for many problems, it does not give any additional utility for noninteractive triangle counting.
\ificalp
The proof of~\cref{thm:noninteractiv-ub-intro}
is deferred to the full version.
\fi

\subsubsection{Lower Bound for the Interactive Local Model} 

Next, we investigate triangle counting in the interactive setting. 
Imola et al.~\cite{ImolaMC21} present an $\eps$-LEDP for triangle counting in the interactive model with additive  error of
$O\big(\sqrt{\cfours(G)}/{\eps}+{\sqrt n\cdot \dmax}/{\eps^2}\big)$,
where $\dmax$ is an upper bound on the maximum degree.

We give a lower bound on the additive error of LEDP algorithms  for triangle counting in the interactive model. Note that $\Omega(n)$ additive error is unavoidable for triangle counting even in the central model, because the (edge) global sensitivity of the number of triangles is $n-2$ (and this lower bound is tight in the central model).
There were no previously known lower bounds for this problem (or any other problem on graphs) specific to the interactive LEDP model that applied to even 2-round algorithms. Our lower bound applies to interactive algorithms with \emph{any} number of rounds.

\begin{thm}[Interactive Lower Bound]\label{thm:interactive-lb-intro}
There exist a family of graphs and a constant $c>0$ such that for every %
$\eps  \in (0,1)$,  $n \in \naturals, \alpha\in(0,n^2]$ and  $\delta \in \left[0, \frac{1}%
{{500}}
\cdot \frac{\eps^3 \alpha^2}{n^5 \ln(n^3/(\eps \alpha))}\right]$, 
every (potentially interactive) 
$(\eps,\delta)$-local edge differentially private algorithm 
that gets an $n$-node graph from the family as input and  approximates the number of triangles in the graph
with additive error at most $\alpha$ (with  probability at least 2/3) must have $\alpha\geq c\cdot\frac{n^{3/2}}\eps$.
\end{thm}
Our lower bound is obtained via a  reduction from the problem of computing the summation of $n$ randomly sampled bits in $\{0,1\}$ in the LDP model, studied in a series of works~\cite{beimel2008distributed,chan2012optimal,duchi2018minimax,JosephMNR19}.
Our lower bound matches the upper bound of \cite{ImolaMC21} 
for constant $\eps$ and for graphs where $d_{max}=\Theta(n)$ and $C_4(G)=O(n^3)$. It is open whether additive error of $o(n^2)$ can be achieved for general graphs.

\medskip

\subsection{Technical Overview of the Noninteractive Lower Bound}\label{sec:high-level-lb-intro}

Typical techniques for proving lower bounds in the local model heavily rely on two facts that hold for simpler datasets: first, each party's information is not seen by other parties; second, arbitrary changes to the information of one party have to be protected. Both of these conditions fail for graphs in the LEDP model: each edge is shared between two parties, and only changes to one edge are protected in the strong sense of neighboring datasets, imposed by differential privacy.

To overcome these difficulties, we develop a new lower bound method, based on reconstruction attacks in the central model.
Such attacks use accurate answers to many queries to reconstruct nearly all the entries of a secret dataset~\cite{DiNi03,DworkMT07,DworkY08,KasiviswanathanRSU10,KasiviswanathanRS13,anindya_lb}. They are usually applied to algorithms that release many different values. However, a triangle-estimation algorithm returns a single number. 
Consider a  naïve attempt to mount an attack using the algorithm as a black box, that is, by simulating every query using a separate invocation of the triangle counting algorithm. This would require  us to run the local randomizers many times, degrading their privacy parameters and making a privacy breach vacuous. 

To resolve this issue, in our attack, we use the noninteractive
triangle-estimation algorithm as a {\em gray box}. 
Since the algorithm is noninteractive, it is specified by local randomizers for all vertices and a postprocessing algorithm that runs on the outputs of the randomizers. We use a secret dataset $X$ to create a secret subgraph, run the randomizers for the vertices in the secret subgraph only twice, and publish the results. By properties of the randomizers and by composition, the resulting procedure is differentially private. In the next phase, we postprocess the published information to complete the secret subgraph to different graphs corresponding to the queries needed for our attack. Then we feed these graphs to the triangle approximation algorithm, except that for the vertices in the secret subgraph, we rely only on the published outputs. If the triangle counting algorithm is accurate, we get accurate answers to our queries. Even though the randomness used to answer different queries is correlated, we show that a good approximation algorithm for triangle counts allows us to get most of the queries answered correctly. 
Finally, we use a novel anti-concentration bound (\cref{lem:anti-concentration}, below)
to demonstrate that our attack succeeds in reconstructing most of the secret dataset with high probability. This shows that the overall algorithm we run in this process is not differentially private, leading to the conclusion that a very accurate triangle counting algorithm cannot exist in the  noninteractive LEDP setting.

We call the  queries used in our attack  {\em outer-product queries}. The queries are linear, but their entries are dependent. To define this class of queries, we represent the secret dataset $X$ with $n^2$ bits as an $n\times n$ matrix. An  outer-product query to $X$ specifies two vectors $A$ and $B$ of length $n$ with entries in $\{-1,1\}$ and returns $A^TXB,$ that is, $\sum_{i,j\in[n]} A_iX_{ij} B_j$.

To analyze our reconstruction attack, we prove the following anti-concentration bound
for random outer-product queries,
which might be of independent interest. 

\begin{lem}[Anti-concentration for random outer-product queries]\label{lem:anti-concentration}
Let $M$ be an $n\times n$ matrix with entries $M_{ij}\in\{-1,0,1\}$ for all $i,j\in[n]$ and $m$ be the number of nonzero entries in $M.$
Let $A$ and $B$ be drawn uniformly and independently from $\{-1,1\}^n.$ If $m\geq\gamma n^2$ for some constant $\gamma$, then
$$\Pr\Big[|A^T M B|>\frac{\sqrt{m}}2\Big]\geq \frac{\gamma^2}{16}.
$$
\end{lem}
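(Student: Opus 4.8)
The plan is to study the random variable $Z = A^T M B = \sum_{i,j} A_i M_{ij} B_j$ and show it is "spread out" by combining a second-moment lower bound with a fourth-moment upper bound via the Paley–Zygmund inequality. First I would condition on $B$ and work with $Z$ as a linear form in the independent Rademacher variables $A_1,\dots,A_n$: writing $c_i = c_i(B) = \sum_j M_{ij} B_j$, we have $Z = \sum_i A_i c_i$. Then $\mathbb{E}[Z^2 \mid B] = \sum_i c_i^2$, and taking expectation over $B$ as well, $\mathbb{E}[Z^2] = \sum_i \mathbb{E}[c_i^2] = \sum_{i,j} M_{ij}^2 = m$, since the off-diagonal cross terms in $c_i^2 = \sum_{j,j'} M_{ij} M_{ij'} B_j B_{j'}$ vanish in expectation. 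So $\mathbb{E}[Z^2] = m \geq \gamma n^2$.

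Next I would bound the fourth moment $\mathbb{E}[Z^4]$ from above by $O(m^2)$. This is the standard hypercontractivity / Khintchine-type estimate for bilinear Rademacher chaos: expanding $Z^4 = \sum A_{i_1}A_{i_2}A_{i_3}A_{i_4} c_{i_1}c_{i_2}c_{i_3}c_{i_4}$ and taking expectation over $A$, only terms where indices pair up survive, giving $\mathbb{E}[Z^4 \mid B] \le 3\big(\sum_i c_i^2\big)^2$. Then taking expectation over $B$ of $\big(\sum_i c_i^2\big)^2 = \sum_{i,i'} c_i^2 c_{i'}^2$ and again expanding each $c_i^2 = \sum_{j,j'} M_{ij}M_{ij'}B_jB_{j'}$ in terms of Rademacher $B_j$'s, the surviving terms are again those with paired $B$-indices, and a careful bookkeeping shows $\mathbb{E}\big[(\sum_i c_i^2)^2\big] = O(m^2)$ (each contributing term is a product of $M$-entries squared, summed over $O(1)$ free index patterns, each bounded by $m^2$ since $\sum_{i,j} M_{ij}^2 = m$). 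Hence $\mathbb{E}[Z^4] \le K m^2$ for an absolute constant $K$ (one can take $K$ around $15$).

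With $\mathbb{E}[Z^2] = m$ and $\mathbb{E}[Z^4] \le K m^2$ in hand, Paley–Zygmund gives, for any $\lambda \in (0,1)$,
$$
\Pr\big[Z^2 > \lambda \mathbb{E}[Z^2]\big] \;\ge\; (1-\lambda)^2 \frac{(\mathbb{E}[Z^2])^2}{\mathbb{E}[Z^4]} \;\ge\; (1-\lambda)^2 \frac{m^2}{K m^2} \;=\; \frac{(1-\lambda)^2}{K}.
$$
Taking $\lambda = 1/4$ yields $\Pr[Z^2 > m/4] \ge \frac{9}{16K}$, i.e.\ $\Pr[|Z| > \sqrt{m}/2] \ge \frac{9}{16K}$. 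At this point I would need to relate the constant $\frac{9}{16K}$ to the stated bound $\gamma^2/16$: since $m \ge \gamma n^2$ and $m \le n^2$ we have $\gamma \le 1$, so $\gamma^2/16 \le 1/16$, and it suffices that $\frac{9}{16K} \ge \frac{1}{16}$, which holds as soon as $K \le 9$. If my fourth-moment constant comes out larger than $9$, I would instead use the sharper Bonami–Beckner hypercontractive inequality for degree-$2$ Rademacher chaos, $\|Z\|_4 \le 3 \|Z\|_2$, i.e.\ $K = 81$... that is worse, so the right move is a more careful direct fourth-moment expansion tracking that diagonal ($i=i'$, $j=j'$) contributions dominate, pushing $K$ down, or alternatively to prove the statement only with the weaker constant and note the paper's $\gamma^2/16$ is not tight. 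The main obstacle is exactly this: getting the fourth-moment constant small enough (or the second-moment/ anti-concentration argument sharp enough) to land the clean $\gamma^2/16$ bound rather than an unspecified absolute constant; the dependence on $\gamma$ entering as $\gamma^2$ (rather than $\gamma$) suggests the intended argument may actually split $M$ into a dense $\gamma n^2$-entry core and argue more carefully, or track how much of $\mathbb{E}[Z^4]$ can be "concentrated" — but the second/fourth-moment route above is the most robust plan and certainly delivers the qualitative claim.
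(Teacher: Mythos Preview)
Your approach is exactly the paper's: compute $\mathbb{E}[Z^2]=m$, upper-bound $\mathbb{E}[Z^4]$, and apply Paley--Zygmund with $\lambda=1/4$. The place you get stuck---and the source of your puzzlement about the $\gamma^2$---is the fourth-moment bound. You try to prove $\mathbb{E}[Z^4]\le K m^2$ and then worry about whether $K\le 9$. The paper does something much cruder and easier: it bounds $\mathbb{E}[Z^4]\le 9n^4$ by simply counting nonzero terms in the expansion. Each term $M_{i_1j_1}M_{i_2j_2}M_{i_3j_3}M_{i_4j_4}\,\mathbb{E}[A_{i_1}A_{i_2}A_{i_3}A_{i_4}]\,\mathbb{E}[B_{j_1}B_{j_2}B_{j_3}B_{j_4}]$ has absolute value at most $1$, and the expectation factor is nonzero only if each $i$-index and each $j$-index appears an even number of times; there are at most $3n^2$ such $i$-tuples and $3n^2$ such $j$-tuples, giving at most $9n^4$ nonzero terms.

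The $\gamma^2$ then falls out immediately from the ratio in Paley--Zygmund:
\[
\frac{(\mathbb{E}[Z^2])^2}{\mathbb{E}[Z^4]}\ \ge\ \frac{m^2}{9n^4}\ \ge\ \frac{(\gamma n^2)^2}{9n^4}\ =\ \frac{\gamma^2}{9},
\]
and with $\lambda=1/4$ you get $(1-\tfrac14)^2\cdot\tfrac{\gamma^2}{9}=\tfrac{\gamma^2}{16}$ exactly. So the $\gamma^2$ is not coming from any structural decomposition of $M$; it is simply $(m/n^2)^2$. Your tighter target $\mathbb{E}[Z^4]=O(m^2)$ would actually give a \emph{stronger} conclusion (a universal constant lower bound independent of $\gamma$), but it is not needed and, as you noticed, nailing the constant there takes more work.
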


The literature on reconstruction attacks describes other classes of dependent queries~\cite{KasiviswanathanRSU10}; the outer-product queries arising here required a new and qualitatively different analysis.

\subsection{Additional Related Work}\label{app:related}
One of the difficulties with proving lower bounds in the local model is that Randomized Response, despite providing strong privacy guarantees, supplies enough information to compute fairly sophisticated statistics.
For example, Gupta, Roth and Ullman~\cite{GuptaRU12} show how the output of Randomized Response can be used to estimate the density of all cuts in a graph. Karwa et al.~\cite{KarwaSK2014-ergm} show how to fit exponential random graph models based on randomized response output. For certain model families, this would entail estimation of the number of triangles; however, they provide no theoretical error analysis, only experimental evidence for convergence. Randomized Response has also been studied in the statistics literature with a focus on small probabilities of flipping an edge. Balachandran et al.~\cite{BalachandranKV2017-low-error} analyze the distribution of the naive estimator that counts the number of triangles in the randomized responses (when flip probabilities are very low). Chang et al.~\cite{ChangKY2022-multi-replicates} give estimation strategies for settings where the flip rate is unknown but multiple replicates with independent noise are available. To the best of our understanding, these works do not shed light on the regime most relevant to privacy, where edge-flip probabilities are close to $1/2$.

A number of  works have looked at triangle counting and other graph problems
in the empirical setting~\cite{sunAnalyzing2019,Qin17,gao2018local,Ye0A0X20} in ``decentralized'' privacy models. In all but~\cite{sunAnalyzing2019}, the local view consists of the adjacency list. 
The local views in Sun \etal~\cite{sunAnalyzing2019} consist of two-hop neighborhoods. Such a model results in less error since nodes 
see all of their adjacent triangles and can report how many they see using the geometric mechanism.

\subsection{Organization}
Various models of differential privacy, including LEDP, are defined in \cref{sec:privacy_defs}. Our proof of the lower bound for the noninteractive model,  \cref{thm:noninteractiv-lb-intro}, appears in 
\cref{sec:one-round-lower}. The anti-concentration lemma for out-product queries (\cref{lem:anti-concentration}) is proved in \cref{sec:anti-concentration}.
Our analysis of   Randomized Response and the proof of \cref{thm:noninteractiv-ub-intro}
appears in 
\ificalp
the full version~\cite{fullversion}.
\else
\cref{sec:one-round-upper}.
\fi
The proof of \cref{thm:interactive-lb-intro} for the interactive LEDP model appears in \cref{sec:interactive-lb}.
\ificalp
\else
In \cref{sec:append-privacy-tools}, we state several privacy tools (\cref{sec:privacy-tools})  
and concentration bounds (\cref{sec:concentration-bounds}).
\fi

\section{Background on Differential Privacy}\label{sec:privacy_defs}

We begin with the definition of differential privacy that applies to datasets represented as vectors as well as to graph datasets. 

\begin{defnt}[Differential Privacy \cite{DMNS06,DworkKMMN06}]\label{def:dif-privacy} 
Let $\eps>0$ and $\delta\in[0,1)$. A randomized algorithm $\alg$ is \emph{$(\eps,\delta)$-differentially private (DP)} (with respect to the neighbor relation on the universe of the datasets) if for all events $S$ in the output space of $\alg$ and all  neighboring datasets $X$ and $X'$,
$$\Pr[\alg(X) \in S] \leq \exp(\eps)\cdot \Pr [\alg(X') \in S]\,+\delta.$$
When $\delta=0$, the algorithm is $\eps$-differentially private (sometimes also called ``purely differentially private'').
\end{defnt}

Differential privacy can be defined with respect to any notion of neighboring datasets. 
When datasets are represented as vectors, datasets $X$ and $Y$ are considered neighbors if they differ in one entry. 
In the context of graphs, there are two natural notions of neighboring graphs that can be used in the definition: edge-neighboring and node-neighboring. 
We use predominantly the former, but define both to make discussion of previous work clear.

\begin{defnt}\label{def:edge-neighboring}
Two graphs $G = (V, E)$ and $G' = (V', E')$ are \defn{edge-neighboring}
if $G$ and $G'$ differ in exactly one edge, that is,
if  $V = V'$ and $E$ and $E'$ differ in exactly one element.
Two graphs are \defn{node-neighboring} if one can be obtained from the other by removing a node and its adjacent edges.
\end{defnt}

If the datasets are graphs with edge (respectively, node) neighbor relationship, we call a differentially private algorithm simply {\em edge-private} (respectively, {\em node-private}).

\subsection{The local model} 
The definition of differential privacy implicitly assumes a trusted curator that has access to the data, runs a private algorithm on it, and releases the result. This setup is called the {\em central model} of differential privacy. In contrast, in the {\em local model} of differential privacy, each party participating in the computation holds its own data. The interaction between the parties is coordinated by an algorithm $\alg$ that accesses data via {\em local randomizers}. A local randomizer is a differentially private algorithm that runs on the data of one party. In the context of graph datasets, the input graph is distributed among the parties as follows: each party corresponds to a node of the graph and its data is the corresponding row in the adjacency matrix of the graph. 
 In each round of interaction, the algorithm $\alg$ assigns each party a local randomizer (or randomizers) that can depend on the information obtained in previous rounds.

We adapt the definition of local differential privacy from~\cite{kasiviswanathan2011can,JosephMNR19} to the graph setting.
Consider an undirected graph $G=([n],E)$ represented by an $n\times n$ adjacency matrix $\Adj$. Each party $i\in[n]$ holds the $i$-th row of $\Adj$, denoted $\adjb_{i*}$. We sometimes refer to $\adjb_{i*}$ as the {\em adjacency vector} of party $i$. Entries of $\Adj$ are denoted $a_{ij}$ for $i,j\in[n].$

\begin{defnt}[Local Randomizer]\label{def:local-randomizer}
    Let $\eps>0$ and $\delta\in[0,1)$.
    An \defn{$(\eps,\delta)$-local randomizer} $R: \{0,1\}^n \rightarrow \rangeout$ is an $(\eps,\delta)$-edge DP 
    algorithm that takes as input %
    the set of neighbors of one node, represented by an adjacency vector $\adjb\in\{0,1\}^n$.  In other words, $\Pr\left[R(\adjb) \in Y\right] \leq e^{\eps} \cdot
    \Pr\left[R(\adjb') \in Y\right]+\delta$ for all $\adjb$ and $\adjb'$
    that differ in one bit
    and all sets of outputs $Y \subseteq \rangeout$.
    The probability is taken over the random coins of $R$ (but \emph{not} over the choice of the input). When $\delta=0$, we say that $R$ is an $\eps$-local randomizer.
\end{defnt}

A randomized algorithm $\alg$ on a distributed graph is 
$(\eps, \delta)$-\ledp if it satisfies~\cref{def:ldp}.

\begin{defnt}[Local Edge Differential Privacy]\label{def:ldp}
A \defn{transcript} $\pi$ is a vector consisting of 5-tuples $(S^t_U, S^t_R, S^t_\eps, S^t_\delta, S^t_Y)$ -- encoding the set of parties chosen, set of randomizers assigned, set of randomizer privacy parameters, and set of randomized outputs produced -- for each round $t$. Let $S_\pi$ be the collection of all transcripts and $S_R$ be the collection of all randomizers. Let $\perp$ denote a special character indicating that the computation halts.
An \defn{algorithm} in this model is a function $\alg: S_\pi \to 
(2^{[n]} \times 2^{S_R} \times 2^{\mathbb{R}^{\geq 0}} \times 2^{\mathbb{R}^{\geq 0}})\; \cup \{\perp\}$
mapping transcripts to sets of parties, randomizers, and randomizer privacy parameters. The length of the transcript, as indexed by $t$, is its round complexity.

Given $\eps\geq 0$ and $\delta\in[0,1)$,  a randomized algorithm $\alg$ on a (distributed) graph $G$ is \defn{$(\eps,\delta)$-locally edge differentially private (LEDP)} if the algorithm that outputs the entire transcript generated by $\alg$ is $(\eps,\delta)$-edge differentially private on graph $G.$
 When $\delta=0$, we say that $\alg$ is an $\eps$-LEDP.
 
 If $t=1$, that is, if there is only one round, then $\alg$ is called \defn{noninteractive}. Otherwise, $\alg$ is called \defn{interactive}.
\end{defnt}

Observe that a noninteractive LEDP algorithm is specified by a local randomizer for each node and a postprocessing algorithm $\cP$ that 
takes the outputs of the local randomizers as input.

We 
use a local algorithm
known as \emph{randomized response}, initially due to~\cite{Warner65},  but since adapted to differential privacy~\cite{kasiviswanathan2011can}.

\begin{defnt}[Randomized Response]\label{def:randomized-response}
Given a privacy parameter $\eps>0$ and a $k$-bit vector $\adjb$,
the algorithm $\rr_\eps(\adjb)$ outputs a $k$-bit vector,
where for each $i\in[k],$ bit $i$ is $a_i$ with probability $\frac {e^\eps}{e^\eps+1}$ and $1-\adj_i$ otherwise.
\end{defnt}

\begin{thm}[Randomized Response is $\eps$-LR]\label{thm:rr-ldp}
Randomized Response is an $\eps$-local randomizer. %
\end{thm}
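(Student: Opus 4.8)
The plan is to verify directly that the map $\rr_\eps$ satisfies the definition of an $\eps$-local randomizer (\cref{def:local-randomizer}), i.e.\ that it is an $\eps$-edge DP algorithm taking an adjacency vector as input. Since $\delta=0$ here, I only need to exhibit, for every pair of inputs $\adjb,\adjb'\in\{0,1\}^k$ that differ in exactly one coordinate, and every output set $Y\subseteq\{0,1\}^k$, the bound $\prob[\rr_\eps(\adjb)\in Y]\le e^\eps\cdot\prob[\rr_\eps(\adjb')\in Y]$.

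First I would observe that $\rr_\eps$ flips each of the $k$ bits independently, so the output distribution is a product distribution: for any target vector $z\in\{0,1\}^k$, $\prob[\rr_\eps(\adjb)=z]=\prod_{i\in[k]} p_i(z_i)$, where $p_i(z_i)=\tfrac{e^\eps}{e^\eps+1}$ if $z_i=\adj_i$ and $p_i(z_i)=\tfrac{1}{e^\eps+1}$ if $z_i\ne\adj_i$. Next, for a single coordinate $i$ and any bit value $z_i$, the ratio $p_i(z_i)/p_i'(z_i)$ of the probabilities under $\adjb$ versus $\adjb'$ is $1$ whenever $\adj_i=\adj_i'$, and is one of $e^\eps$ or $e^{-\eps}$ when $\adj_i\ne\adj_i'$ (since the two possible single-coordinate probabilities are $\tfrac{e^\eps}{e^\eps+1}$ and $\tfrac{1}{e^\eps+1}$, whose ratio is $e^{\pm\eps}$). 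Because $\adjb$ and $\adjb'$ differ in exactly one coordinate, the product over all coordinates telescopes: for every $z\in\{0,1\}^k$,
\[
\frac{\prob[\rr_\eps(\adjb)=z]}{\prob[\rr_\eps(\adjb')=z]}\;=\;\prod_{i\in[k]}\frac{p_i(z_i)}{p_i'(z_i)}\;\le\;e^\eps.
\]
Then I would lift this pointwise bound to arbitrary output events by summing: $\prob[\rr_\eps(\adjb)\in Y]=\sum_{z\in Y}\prob[\rr_\eps(\adjb)=z]\le e^\eps\sum_{z\in Y}\prob[\rr_\eps(\adjb')=z]=e^\eps\cdot\prob[\rr_\eps(\adjb')\in Y]$, which is exactly the $\eps$-edge DP guarantee with $\delta=0$.

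There is no real obstacle here — the argument is entirely routine. The only point that deserves a word of care is making explicit that the relevant neighbor relation for a local randomizer on an adjacency vector is the Hamming-distance-one relation on $\{0,1\}^k$ (flipping one edge of the graph changes one bit of the row held by each endpoint), so that it suffices to analyze inputs differing in a single coordinate; the per-coordinate independence of $\rr_\eps$ then does all the work, and no composition theorem is needed since we handle the full product distribution in one shot.
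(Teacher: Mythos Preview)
Your argument is correct and is exactly the standard proof of this well-known fact. The paper itself does not supply a proof of this theorem; it simply states it as background (attributing randomized response to Warner and its DP formulation to prior work), so there is no ``paper's proof'' to compare against beyond noting that your write-up is the conventional verification.
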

\ificalp
 Additional privacy tools are described in the full version of this paper.
\else
 Additional privacy tools can be found in \cref{sec:append-privacy-tools}.
\fi

\section{The Noninteractive Lower Bound}\label{sec:one-round-lower}

In this section, we prove 
\cref{thm:noninteractiv-lb-intro}, which we restate formally here.

\begin{thm}\label{thm:noninteractiv-lb}
There exists a family of graphs, such that every noninteractive $(\eps,\delta)$-LEDP algorithm with $\eps\in(0,\frac{1}{20})$ and $\delta\in[0,\frac {1}{100})$ 
that gets an $n$-node graph from the family as input and approximates the number of triangles in the graph within additive error %
$\alpha$ with probability at least $1-\frac{1}{3^6\cdot 2^{7}}$, must have $\alpha=\Omega(n^2)$.
\end{thm}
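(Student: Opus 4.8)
The plan is to realize the attack strategy outlined in \cref{sec:high-level-lb-intro}: build a family of graphs that encodes an arbitrary secret $n' \times n'$ binary matrix $X$ (with $n' = \Theta(n)$), and show that if a noninteractive LEDP algorithm estimates triangle counts within additive error $o(n^2)$, then one can use it as a gray box to reconstruct most of $X$, contradicting differential privacy. Concretely, I would first fix the gadget: take two groups of vertices $U_1, U_2$ of size $n'$ each (indexed so that edge $u_{1i}u_{2j}$ encodes $X_{ij}$), plus auxiliary "query" vertices $W_0, W_1, W_2$ (visible in the figure macro \LB) whose adjacency to $U_1 \cup U_2$ is chosen deterministically by the analyst to realize a given outer-product query $(A,B) \in \{-1,1\}^{n'} \times \{-1,1\}^{n'}$. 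The key design point is that a vertex $w$ joined to the "$A_i = +1$" side of $U_1$ and the "$B_j = +1$" side of $U_2$ (and analogously with complementation gadgets for the $-1$ entries, mirroring the $2A'+2A''-\vec1$ decomposition in the \drawMatrices\ display) contributes a triangle count that is an affine function of $\sum_{ij} A_i X_{ij} B_j = A^TXB$; choosing the $W$-gadgets to cancel the lower-order terms, the triangle count of the whole graph equals (a known affine function of) the outer-product query answer.

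The second step is the gray-box simulation. Since the algorithm is noninteractive, it is a list of local randomizers $R_v$ plus a postprocessing map $\cP$. I run $R_v$ for each $v \in U_1 \cup U_2$ exactly \emph{twice} — once with the "all-ones completion" and once with the "all-zeros completion" of the edges to the (as-yet-unspecified) $W$-vertices — and publish both transcripts; by group privacy / basic composition over these two runs, publishing this is $(2\eps, 2e^\eps\delta)$-edge-DP in $X$ (this is where $\eps < 1/20$ is used to keep the composed parameter below the threshold at which DP still forbids reconstruction). Then, \emph{without touching the $U$-vertices again}, for each query $(A,B)$ I pick the appropriate one of the two published outputs per $U$-vertex depending on whether that vertex should see its $W$-neighbors or not, run the (non-private, local) randomizers for the $W$-vertices freshly, and feed everything to $\cP$ to get an estimate $\hat T_{A,B}$; accuracy of the triangle algorithm plus the affine relation yields $\widehat{A^TXB} = A^TXB \pm O(n^2/\text{(gadget-scaling)})$, which I arrange to be $\pm o(n'^2)$, i.e. $|\widehat{A^TXB} - A^TXB| \le \sqrt{m}/4$ for $m \approx n'^2$ the number of uncertain entries, with good probability over each query's internal randomness.

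The third step is the reconstruction argument. Draw $A, B$ uniformly; by \cref{lem:anti-concentration} applied to $M = X - \tilde X$ (the discrepancy between the true $X$ and any candidate reconstruction $\tilde X$ that differs from $X$ in $\ge \gamma n'^2$ entries, re-signed to $\{-1,0,1\}$), we have $|A^T M B| > \sqrt{m}/2$ with probability $\ge \gamma^2/16$; combined with the accuracy bound $|\widehat{A^TXB} - A^TXB| \le \sqrt m/4$, a single random query rules out $\tilde X$ with constant probability, so $O(\log)$-many (or even a fixed constant number, by Markov plus the $\ge 1 - 1/(3^6 2^7)$ success probability bookkeeping in the theorem statement) independent queries let the attacker output $\tilde X$ agreeing with $X$ on all but $< \gamma n'^2$ entries with probability bounded away from the DP-permissible value. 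A standard packing/averaging argument (the attacker's output distribution cannot be close in the $(\eps',\delta')$-sense to what it would be on an independent fresh $X$) then forces $\alpha = \Omega(n^2)$.

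The main obstacle — and the place where the "mix-and-match" idea does real work — is the privacy accounting of the gray-box step: I must ensure that the \emph{entire} process the attacker runs (publish two transcripts, then answer polynomially many queries by re-postprocessing) is differentially private in $X$ with parameters still in the forbidden regime, even though the per-$U$-vertex choice of "which of the two published outputs to use" depends on the query and hence creates correlations across queries. The resolution is exactly that only the \emph{two} published transcripts depend on $X$ — everything afterward is postprocessing of those fixed strings plus fresh independent coins for the $W$-vertices — so no additional privacy is lost no matter how many queries are asked; the care needed is in checking that flipping one edge $X_{ij}$ changes only $R_{u_{1i}}$ and $R_{u_{2j}}$ (two parties, each run twice: four randomizer invocations, hence the constant blow-up in $\eps, \delta$), and that the $W$-gadget construction never requires re-running a $U$-vertex with a third completion. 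A secondary technical point is tuning the gadget multiplicities so the triangle count reproduces $A^TXB$ with coefficient large enough that $o(n^2)$ triangle-error maps to $<\sqrt{m}/4$ query-error while keeping the total vertex count $\Theta(n)$; this is a routine but fiddly calculation I would carry out after fixing the construction.
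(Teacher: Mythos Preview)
Your overall plan matches the paper's: encode a secret matrix $X$ as a bipartite subgraph on $U_1\cup U_2$, run each $U$-vertex's randomizer exactly twice (once with the $W$-completion, once without), and then post-process these fixed transcripts to answer many outer-product queries. The gadget details you sketch (using $W_0,W_1,W_2$ from the \texttt{\textbackslash LB} macro) differ from the paper's final construction, which uses a \emph{single} set $W$ and simulates each outer-product query via \emph{three} submatrix queries using the identity $A^TXB = 2((A')^TXB' + (A'')^TXB'') - \vec{1}^TX\vec{1}$, but either route is fine.

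The genuine gap is in your reconstruction step. You claim that ``$O(\log)$-many (or even a fixed constant number)'' of random outer-product queries suffice for the attacker to output some $\tilde X$ that agrees with $X$ on all but $\gamma n'^2$ entries. This is false: a single random query rules out one \emph{fixed} bad candidate with probability $\gamma^2/16$, but there are $2^{n'^2}$ bad candidates, and the attacker must output a $\tilde X$ for which \emph{no} bad candidate survives. The paper handles this with $k=\Theta(n'^2)$ random queries, applies Chernoff to each fixed bad $Y$ (getting failure probability $\exp(-\Theta(k)) < 2^{-n'^2}$), and then takes a union bound over all $2^{n'^2}$ bad $Y$'s (\cref{lemma:failure-to-catch}). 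The high per-query success probability $1-1/(3^6\cdot 2^7)$ in the theorem statement is not there to reduce the number of queries; it is calibrated so that, by Markov, at most a $\gamma^2/64$ fraction of the $\Theta(n'^2)$ correlated query answers are inaccurate with probability $\ge 5/6$ (\cref{claim:many-outer-product-queries-are-accurate}). You also need to be careful here that the answers to the $\Theta(n'^2)$ queries are \emph{not} independent (they all reuse the same two stored transcripts), so there is no concentration on the number of accurate answers beyond what Markov gives---which is precisely why the per-query failure probability must be as small as $\gamma^2/(9\cdot 128)$.

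A smaller point: your privacy accounting via ``four randomizer invocations'' overcounts. In LEDP the entire transcript of one run of \emph{all} randomizers is already $(\eps,\delta)$-edge-DP, so flipping one bit $X_{ij}$ (one edge) is absorbed by that single-run guarantee even though two parties' inputs change. Composing the two full runs gives $(2\eps,2\delta)$, not $(4\eps,\cdot)$; invoking ``group privacy'' here is unnecessary and gives the wrong constant.
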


At a high level, the lower bound is proved by showing that a noninteractive local algorithm for counting triangles can be  used  to mount a reconstruction attack on a secret dataset $X$ in the central model of differential privacy.
A groundbreaking result of Dinur and Nissim~\cite{DiNi03}---generalized in subsequent works \cite{DworkMT07,DworkY08,KasiviswanathanRSU10,KasiviswanathanRS13,anindya_lb}---shows that if an algorithm answers too many random linear queries on a sensitive dataset of $N$ bits  too accurately then a large constant fraction of the dataset can be reconstructed.
This is referred to as a    ``reconstruction attack''.
Specifically,  Dinur and Nissim show that $N$ random linear queries answered to within $\pm O(\sqrt{N})$ are sufficient for reconstruction. 
It is well known that if the output of an algorithm on a secret dataset can be used for reconstruction, then this algorithm is not differentially private. 
This line of reasoning leads to a lower bound of $\Omega(\sqrt{N})$ on the additive error of any differentially private algorithm answering $N$ random linear queries.

Suppose  we could show that an LEDP triangle counting algorithm with $O(n^2)$ additive error can be used to construct a DP algorithm for answering $N$ linear queries with $O(\sqrt N)$ additive error on some dataset of size $N$ --- then by the above, we reach a contradiction to the privacy of the algorithm.  
While indeed a  triangle counting algorithm can be used to answer a \emph{single} linear query,  the main challenge is that the Dinur-Nissim reconstruction attack requires answering not one, but rather $n$, linear queries on the same dataset. 
Let $\alg$ be an $(\eps,\delta)$-LEDP triangle counting algorithm.
If we naively try to answer each linear query to $X$ using a new invocation of the triangle counting algorithm in a black-box manner, this would result in $n$ invocations of $\alg$. This in turn would cause the privacy parameters to grow linearly with $n$, making the privacy breach vacuous. That is, the result would be of the following sort. An $(\eps, \delta)$-LEDP algorithm for triangle counting with low additive error implies an $(O(\eps n), O(n\delta))$-DP algorithm for answering linear queries with low additive error. Since the latter statement is too weak to be used with the results of Dinur and Nissim, we  take a different approach. 

To avoid making $n$ invocations of a triangle counting algorithm, we  develop  a new type of reconstruction attack on a secret dataset $X$, where the set of allowed linear queries has  a special combinatorial structure. We call the new type of queries  \emph{outer-product queries}. We  show that, given access to an $(\eps, \delta)$-LEDP algorithm $\alg$ that approximates the number of triangles up to $O(n^2)$ additive error, we can 
 design a $(2\eps, 2\delta)$-DP algorithm $\cB$ for answering $\Theta(n^2)$ outer-product queries on dataset $X$ of size $N=n^2$, so that a constant fraction of them is answered with $O(n)$ additive error. 
 (The dataset size is $n^2$, so asymptotically the number of random queries and the required accuracy are the same as in the Dinur-Nissim attack.) 
 The main insight is that instead of using $\alg$ as a black-box, we use it in a ``gray-box'' manner. This allows us to answer all $\Theta(n^2)$ queries without degrading the privacy parameters of $\cB$.
  This  in turn allows us to reconstruct $X$, which is a contradiction to the privacy of  algorithm $\cB$, and thus also to the privacy of algorithm $\alg$. Hence, we  conclude that any LEDP triangle-counting algorithm must have $\Omega(n^2)$ additive error.

The rest of  \cref{sec:noninteractive-reduction} is organized as follows. 
In \cref{sec:noninteractive-reduction}, we define  outer-product queries and show that an $(\eps, \delta)$-DP algorithm $\alg$ for triangle-counting with low additive error can be used to construct a $(2\eps, 2\delta)$-DP algorithm $\cB$ for answering outer-product queries with low additive error. 
In \cref{sec:anti-concentration},   we prove an anti-concentration result for random outer-product queries. 
In \cref{sec:attacks}, we use the anti-concentration result to show that  an algorithm $\cB$ that accurately  answers $\Theta(n^2)$ outer-product queries on a sensitive dataset $X\in \{0,1\}^{n\times n}$ can be used to reconstruct most of $X$ and complete the proof of \cref{thm:noninteractiv-lb}.

\subsection{Reduction from Outer-product Queries to Triangle Counting}\label{sec:noninteractive-reduction}

In this section, we prove \cref{lem:answering-queries}, which is at the heart of our reduction. It
shows that, given access to an $(\eps,\delta)$-LEDP algorithm $\alg$ for approximating the number of triangles with low additive error, we can construct an 
$(2\eps,2\delta)$-DP algorithm $\cB$ (in the central model) that accurately answers $\Theta(n^2)$ outer-product queries on a sensitive dataset $X$.
We start by formally defining this new class of queries.

\begin{defnt}[Outer-product queries]\label{def:outerproduct-queries} 
Let $X\in\{0,1\}^{n\times n}$. An {\em outer-product query} to $X$ specifies two vectors $A$ and $B$ of length $n$ with entries in $\{-1,1\}$ and returns $A^TXB,$ that is, $\sum_{i,j\in[n]} A_iX_{ij} B_j$.
\end{defnt}

 Let $\gamma$ be the desired reconstruction parameter that indicates that the attack has been successful if we reconstruct at least $(1-\gamma) n^2$ bits of $X$ correctly. 
 (Later, in \cref{sec:attacks}, $\gamma$ will be set to $\frac{1}{9}$ and the number of queries, $k$, will be set to $\Theta(n^2)$.)
 
\begin{lem}[Answering Outer-product Queries via Triangle Counting]\label{lem:answering-queries}
Let $\eps,\delta>0$ and $\gamma\in(0,1/2)$.
Assume that there is a noninteractive $(\eps,\delta)$-LEDP algorithm $\alg$ that, for every $3n$-node graph, approximates the number of triangles  with probability at least $1-\frac{\gamma^2}{9\cdot 128}$  and has additive error  at most {$\frac{\sqrt \gamma n^2}{20}$}. Then there is an $(2\eps,2\delta)$-DP algorithm $\cB$ {in the central model}  that, for every secret dataset $X\in\{0,1\}^{n\times n}$ and every set of $k$ outer-product queries $(A^{(1)},B^{(1)}),\dots,(A^{(k)},B^{(k)})$,
gives answers $a_1,\dots, a_k$ satisfying
\begin{align}\label{eq:outer-product-queries-accuracy}
\Pr\left[\left|\left\{\ell\in[k]: 
\Big|(A^{(\ell)})^T X B^{(\ell)}-a_\ell\Big|>\frac{\sqrt{\gamma}n}{{4}}\right\}\right|>\frac{\gamma^2 k}{{64}}\right]\leq \frac 1 6.
\end{align}
That is, with probability at most $5/6$, for every dataset $X$ and a set of $k$ outer-product queries, Algorithm $\cB$ answers inaccurately  at most   $\frac{\gamma^2 k}{64}$ of the $k$ queries, where ``inaccurately'' means with additive error more than $\frac{\sqrt \gamma n}{4}$.
\end{lem}
\begin{proof}
Consider an algorithm $\alg$ described in the premise of the lemma.
Since $\alg$ is local noninteractive, it is specified by a local randomizer $R_v(\adjb)$ for each vertex $v$, as well as a postprocessing algorithm $\cP$. Each randomizer takes an adjacency vector $\adjb\in\{0,1\}^n$ as input and passes its output to $\cP.$  
Next, we define algorithm $\cB$ that, given a sensitive dataset $X$ and a set of $k$ outer-product queries, uses the randomizers and the postprocessing algorithm as subroutines to obtain accurate answers to the outer-product queries.

Fix a dataset $X\in\{0,1\}^{n\times n}$. For each outer-product query $(A,B)$, algorithm $\cB$ constructs several
corresponding {\em query} graphs. All query graphs are on the same vertex set $V$ of size ${3n}$, partitioned into 
{three sets $U_1,U_2,$ and $W$}
of size $n$. The vertices in $U_t$ for $t\in\{1,2\}$ are denoted $u_{t1},\dots,u_{tn}$. %
{The vertices of $W$ are denoted $w_1,\dots, w_n$.}
See \Cref{fig:non-interactive-lb} for an illustration.

Algorithm $\cB$  first forms a bipartite {graph $G_X$} %
with parts $U_1$ and $U_2$ with $X$ as the adjacency matrix; that is, it adds an edge $(u_{1i},u_{2j})$ for each $i,j\in[n]$ with $X_{ij}=1.$ {We call} $G_X$ the \emph{secret subgraph}, {because it will be included as a subgraph in every query graph and it will be the only part of that graph that contains any information about the original sensitive dataset $X$}. Note that {$G_X$}  %
does not depend on the outer-product query.
 The remaining edges of each query subgraph are  between $U_1\cup U_2$ and %
 {$W$}
 and are specific to each query graph, so that overall  the resulting graph is
 tripartite. 
For each $v\in U_1\cup U_2$, let $\Gamma_X(v)$ denote the neighbors of $v$ in the secret subgraph $G_X$. A key idea in the construction is that every node in the secret subgraph $G_X$ will have one of \talya{only} two possible neighborhoods in each query graph. This allows algorithm $\cB$ to simulate triangle-counting computations on all query graphs by invoking a local randomizer on each vertex in $U_1\cup U_2$ only twice.
{For each vertex $v\in U_1\cup U_2,$
algorithm $\cal B$ runs its local randomizer $R_v(\cdot)$ twice:
once with the adjacency list specified by  
$\Gamma_X(v)$ 
and once  with the adjacency list specified by $\Gamma_X(v)\cup W$. 
Algorithm $\cal B$ then records the output of the former  invocation as
as $r_0(v)$,  and the latter as $r_1(v)$.}

\begin{figure}
    \begin{center}
\includegraphics[width=0.4\textwidth]{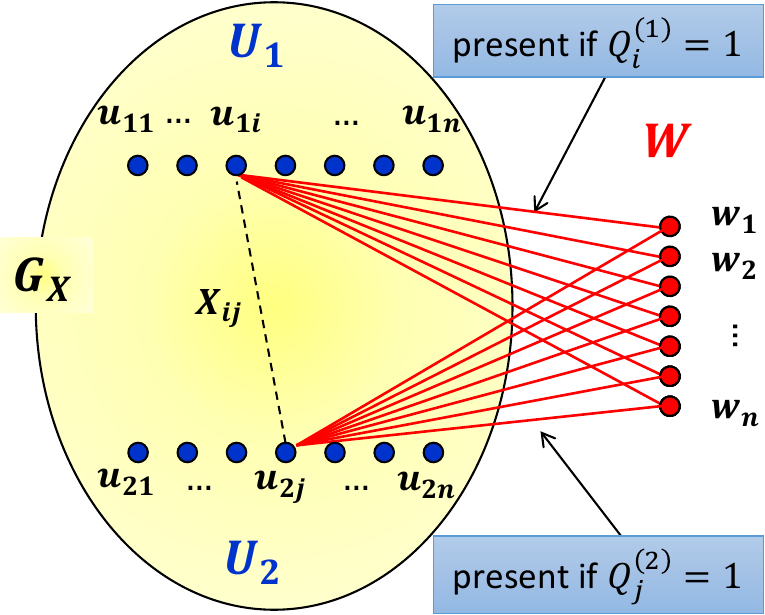}
    \end{center}
    \caption{The construction of the query graph $G_{X,Q}$. 
    Each of the parts $U_1, U_2, W$ consists of $n$ nodes. 
    The dashed line is an edge iff $X_{ij}=1$.
    Only the subgraph $G_X$ (induced by  $U_1\cup U_2$) holds secret information. 
    }
    \label{fig:non-interactive-lb}
\end{figure}

\ificalp
By the composition property of differential privacy,
\else
By \cref{thm:composition}, 
\fi
the algorithm that simply outputs the vector of all  $4n$ responses of the local randomizers is $(2\eps,2\delta)$-DP by composition, because each bit of $X$ is encoded as a potential edge and used in two executions of the randomizers for its endpoints, where each execution (of all randomizers) is $(\eps,\delta)$-LEDP. In the remaining steps, algorithm $\cB$  only postprocesses the vector of responses, and thus it is  $(2\eps,2\delta)$-DP.  

Next, we describe how to postprocess the vector of responses to obtain an answer to an outer-product query $(A,B).$ To answer each outer-product query, algorithm $\cB$ will first obtain answers to three linear queries that we call {\em submatrix queries}. Submatrix queries are defined the same way as outer-product queries, except that vectors $A$ and $B$ have entries in $\{0,1\}$ instead of $\{-1,1\}$. Next, we explain how to answer submatrix  queries on $X$, deferring to \cref{claim:query-simulation} the description of the simulation of each outer-product query with submatrix queries.

To answer a submatrix query $Q=(Q^{(1)},Q^{(2)})$ on dataset $X$, algorithm $\cB$ completes the secret subgraph $G_X$ to a query graph $G_{X,Q}$ as follows.
For each vertex  $u_{ti}\in U_1\cup U_2,$ where $t\in\{1,2\}$ and $i\in[n]$, {it adds  edges determined by $Q^{(t)}$: specifically, if $Q^{(t)}_i=1$, it adds edges from $u_{ti}$ to all vertices in $W$.}
The next claim states the relationship between the number of triangles in $G_{X,Q}$ and the answer to the submatrix query $Q$.

\begin{clm}\label{claim:submatrix-query-answer}
The number of triangles in graph $G_{X,Q}$ is equal to $n\cdot (Q^{(1)})^T X Q^{(2)}.$ 
\end{clm}
\begin{proof}
Observe that $G_{X,Q}$ is tripartite with parts $(U_1,U_2,{W})$,
so all triangles must have one vertex in each part. %
The answer to the submatrix query $Q=(Q^{(1)},Q^{(2)})$ is
$$(Q^{(1)})^T X Q^{(2)}=\sum_{i,j\in[n]} Q^{(1)}_i Q^{(2)}_jX_{ij}.$$
For each term in the sum, both $u_{1i}$ and $u_{2j}$ are adjacent to all nodes in {$W$}
iff $Q^{(1)}_i =Q^{(2)}_j=1.$ 
If the edge $(u_{1i},u_{2j})$ is  present in the graph, then this results in $n$ triangles.
 Thus, each term where $Q^{(1)}_i =Q^{(2)}_j=X_{ij}=1$ corresponds to $n$ triangles of the form $(u_{1i},u_{2j},
 {w_\ell}
 )$, where $\ell\in[n]$. All other terms create no 
 triangles, since either  $X_{ij}=0$, in which case the edge $(u_{1i},u_{2j})$ is not present in the graph, or either  $Q^{(1)}_i=0$ or $Q^{(2)}_j=0,$ in which case $u_{1i}$ and $u_{2j}$ do not have common neighbors.
\end{proof}

To answer a submatrix query $Q$, algorithm $\cB$ simulates a call to the triangle-counting algorithm $\alg$ on the corresponding query graph $G_{X,Q}$. 
First, $\cB$ runs the local randomizers for the vertices in {$W$}
with their adjacency vectors specified by the graph $G_{X,Q}.$ Note that these vertices do not have access to any private information, so this operation does not affect privacy.
For each %
vertex $u_{ti}\in U_1\cup U_2$, where $t\in\{1,2\}$ and $i\in[n]$, algorithm $\cB$ uses the result {$r_b(u_{ti})$} from the previously run randomizer, where $b=Q^{(t)}_i$; e.g., if $Q^{(1)}_i=0$, then $\cB$ uses the result $r_0(u_{1i})$, and if $Q^{(1)}_i=1$, it uses the result $r_1(u_{1i})$.
Now algorithm $\cB$ has results from all vertex randomizers on the graph $G_{X,Q}$ and it simply  runs the postprocessing algorithm $\cP$ on these results. To obtain the answer to the submatrix query, $\cB$ divides the  output of $\cP$ by $n$.

Finally, algorithm $\cB$ answers each outer-product query as specified in the following claim, by getting answers to {three} submatrix queries.

\begin{clm}\label{claim:query-simulation}
An outer-product query to $X$ can be simulated with {three} submatrix queries to $X$. Moreover, if {all three} submatrix queries are answered with additive error at most $\alpha$, then the outer product query can be answered with additive error at most $5\alpha.$
\end{clm}
\begin{proof}
Consider an outer-product query to an $n\times n$ matrix $X$ specified by $A,B\in\{-1,1\}^n.$ Define $n$-bit vectors $A'=\frac 12 (A +\vec{1})$ and $A''=\frac 12 (-A +\vec{1}),$ where $\vec{1}$ denotes a vector of 1s of length $n.$ Define $B'$ and $B''$ analogously. %
Then, as illustrated in \Cref{fig:submatrix_query},
$$A^T X B= 2((A')^T X B'+ (A'')^T X B'')- {\vec{1}^T X\vec{1}}.$$
\ifnum\drawFigTwo=1
\fi
That is, the answer to the outer-product query $(A,B)$ can be computed from the answers to the submatrix queries $(A',B'),(A'',B''),$ {and $(\vec{1},\vec{1})$,} and the additive error increases from $\alpha$ to $5\alpha$, as stated.
\end{proof}

\ifnum\drawFigTwo=1
\begin{figure}
    \includegraphics[width=\textwidth]{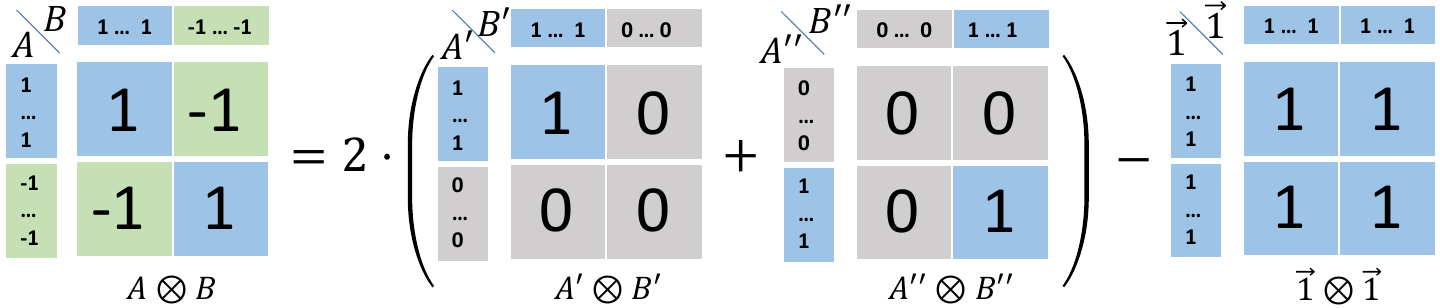}
    \caption{
    Every outer-product query can be simulated using three submatrix queries. For the illustration, the entries of all vectors are rearranged to group the same values together. The outer product is denoted $\otimes.$
}
    \label{fig:submatrix_query}
\end{figure}
\else
\fi

{It remains to prove that algorithm $\cB$ satisfies \eqref{eq:outer-product-queries-accuracy}.}
By the assumption on $\alg$, for every graph $G$, algorithm $\alg$ returns the number of triangles in $G$ 
within an additive error at most {$\frac{\sqrt \gamma n^2}{20}$}
with probability at least $1-\frac{\gamma^2}{{9\cdot 128}}$.
Given a secret dataset $X$ 
and $k$ outer-product queries, algorithm $\cB$ first creates $k$ triples of submatrix queries corresponding to the outer-product queries.  Then, $\cB$ uses $\alg$ as a gray box to answer  all $3k$  submatrix queries  simultaneously. Recall that this is achieved  by  invoking  the local randomizers on vertices holding private information (that is, vertices in parts $U_1, U_2$) twice, once for each potential value of the bit that  corresponds to this vertex in a specific query. Then for each individual submatrix query,  one
local randomizer is invoked on each of the
{$n$ vertices in $W$}
with the adjacency list that corresponds to that specific query graph. Then, to answer each specific submatrix query, algorithm $\cB$ combines the new outputs of the vertices {from $W$}
with the stored outputs from running randomizers on  $U_1\cup U_2$ that correspond to that specific query, and invokes the postprocessing algorithm $\cP$ on this vector of ${3n}$ outputs. Finally, $\cB$ divides $\cP$'s answer by $n$ to obtain the answer to the submatrix query. 

Each invocation of $\cP$ by $\cB$ simulates one triangle-counting computation. Overall, we have $3k$ (\emph{dependent}) simulated triangle-counting computations.
By the assumption on $\alg$, stated  in the premise of
\cref{lem:answering-queries}, the postprocessing algorithm  $\cP$ answers each simulated triangle-counting computation inaccurately (i.e., with additive error {exceeding} {$\frac{\sqrt \gamma n^2}{20}$}) with probability at most $\frac{\gamma^2}{{9\cdot 128}}$ (where this probability is taken over the random coins of the individual ${3n}$ local randomizers, as well as the random coins of $\cP$).
Overall, there are $3k$ (dependent) simulations, and so the expected number of simulated triangle-counting computations for which $\alg$  returns additive error greater than {$\frac{\sqrt \gamma n^2}{20}$} is {at most}
$\frac{\gamma^2\cdot (3k)}{{9\cdot 128}}=\frac{\gamma^2\cdot k}{6\cdot {64}}$.
Hence, 
by Markov's inequality, the probability that the number of inaccurate simulated triangles queries exceeds $\frac{\gamma^2\cdot k}{{64}}$ is at most $\frac 1 6$.

Condition on  the event that at most $\frac{\gamma^2\cdot k}{64}$ of the  triangle-counting computations  are answered inaccurately, so that the remaining computations are answered with error at most $\alpha=\frac{\sqrt \gamma n^2}{20}$, and denote this even by $E$. 
Recall that each triangle-counting computation is used to answer a single submatrix query, and that  by \cref{claim:submatrix-query-answer}, if a triangle-counting computation is answered with additive error $\alpha$, then the corresponding submatrix query is answered with additive error $\alpha/n$.  
Hence, by the above conditioning, at most $\frac{\gamma^2\cdot k}{64}$ of the submatrix queries are answered with additive error greater than $\alpha/n$. 
Each inaccurate answer to a triangle-counting computation can spoil the answer to at most one outer-product query. 
Furthermore, by \cref{claim:query-simulation}, if all three submatrix queries used to compute  a single outer-product query are answered to within additive error $\alpha/n$, then the outer-product query is answered to within  additive error $5\alpha/n$. Hence, by the above conditioning, at most $\frac{\gamma^2 \cdot k}{64}$ of the outer-product queries are answered with additive error greater than $5\alpha/n=\frac{\sqrt \gamma n}{4}$.
Since event $E$ occurs with probability at least $5/6$, 
we get that with probability at least $5/6$, the fraction of outer-product queries that are answered with additive error greater than $\frac{\sqrt \gamma n}{4}$ is at most $\frac{\gamma^2 \cdot k}{64}$.
This completes the proof of \cref{lem:answering-queries}.
\end{proof}

\subsection{Anti-Concentration for Random Outer-Product Queries}\label{sec:anti-concentration}
\newcommand{\RV}{{U}}
In this section, we prove \cref{lem:anti-concentration}.
To analyze our reconstruction attack, we will consider the differences between the true dataset $X$ and a potential reconstructed dataset $Y$.
Let $M$ denote $X-Y$. Then, for an outer-product query $(A,B)$, the difference between the answers to this query on dataset $X$ and on dataset $Y$ is $A^T X B- A^T Y B= A^T M B.$ The main result of this section shows that if $X$ and $Y$ differ on many entries (that is, $M$ has lots of nonzero entries) then a random outer-product query is likely to produce significantly different answers on $X$ and $Y$.

\begin{proof}[Proof of \cref{lem:anti-concentration}]
Let $Z_{ij}=A_iB_j$ for all $i,j\in[n],$ and $\RV=A^T M B.$ 
We prove the lemma by computing the expectation and the second and the fourth moments of $\RV$, and then apply the Paley-Zigmund inequality 
\ificalp\else
(stated in~\cref{prel:paley-zygmund}) 
\fi
to $\RV^2.$

By independence of $A_i$ and $B_j$  for all $i,j\in[n],$ we have $\E[Z_{ij}]=\E[A_i]\cdot\E[B_j]=0$ and
$\V[Z_{ij}]=\E[Z_{ij}^2]=\E[A_{i}^2B_j^2]=1$.
By definition of $\RV$ and the linearity of expectation, 
$$\E[\RV]
=\E[A^T M B]
=\E\Big[\sum_{i,j\in[n]} M_{i,j} Z_{i,j}\Big]
=\sum_{i,j\in[n]} M_{i,j} \E[Z_{i,j}]
=0.$$

Note that random variables $Z_{ij}$ are pairwise independent. This is an important feature of random outer-product queries and the main reason to use them instead of the submatrix queries. This feature greatly simplifies the analysis. Since $\RV$ is unbiased, $\E[\RV^2]=\V[\RV]$. By pairwise independence of $Z_{ij}$,
$$\V[\RV]
=\V\Big[\sum_{i,j\in[n]}M_{ij}Z_{ij}\Big]
=\sum_{i,j\in[n]}M_{ij}^2\ \V[Z_{ij}]
=\sum_{i,j\in[n]}M_{ij}^2
=m.
$$

Next, we give an upper bound on the 4th moment of $\RV$. 
\begin{clm}\label{claim:4th-moment}
$\E[\RV^4]\leq 9n^4.$
\end{clm}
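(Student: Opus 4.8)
The plan is to expand $\E[\RV^4]$ directly as an eight-fold sum and bound it by counting the index patterns that survive the Rademacher expectations. Writing $\RV=\sum_{i,j\in[n]} M_{ij}A_iB_j$, linearity gives
$$\E[\RV^4]=\sum_{i_1,i_2,i_3,i_4}\ \sum_{j_1,j_2,j_3,j_4} M_{i_1j_1}M_{i_2j_2}M_{i_3j_3}M_{i_4j_4}\cdot\E\Big[\prod_{k=1}^4 A_{i_k}\Big]\cdot\E\Big[\prod_{k=1}^4 B_{j_k}\Big].$$
Since the $A_i$ are independent Rademacher variables, $\E[\prod_k A_{i_k}]\in\{0,1\}$, and it equals $1$ precisely when every value occurring among $i_1,\dots,i_4$ occurs an even number of times; likewise for the $B$-factor. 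Hence only tuples with this ``all even multiplicities'' property on both the $i$'s and the $j$'s contribute, and since $|M_{ij}|\le 1$ each such tuple contributes at most $1$ in absolute value.

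The key step is a simple combinatorial domination: the event ``$i_1,\dots,i_4$ have all even multiplicities'' is at most $\sum_{\pi}\mathbf{1}[\,i\text{ is constant on each block of }\pi\,]$, where $\pi$ ranges over the three partitions of $\{1,2,3,4\}$ into two pairs (namely $\{12\}\{34\}$, $\{13\}\{24\}$, $\{14\}\{23\}$): a tuple with all four entries equal is counted three times on the right, and a tuple forming two genuine pairs is counted exactly once. Applying this bound to both the $i$- and the $j$-indices and applying the triangle inequality to the signs, $\E[\RV^4]$ is bounded by $9$ terms, one for each pair $(\pi_A,\pi_B)$ of two-pair partitions, each of the form $\sum |M_{i_1j_1}M_{i_2j_2}M_{i_3j_3}M_{i_4j_4}|$ over tuples with $i$ constant on the blocks of $\pi_A$ and $j$ constant on the blocks of $\pi_B$.

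It then remains to check that each of the $9$ terms is at most $m^2$. Let $N_{ij}=|M_{ij}|\in\{0,1\}$, so $\sum_{i,j}N_{ij}=m$; write $N_a$ for the $a$-th row of $N$, so that $\|N_a\|^2=\sum_c N_{ac}=:r_a$ and $\sum_a r_a=m$. If $\pi_A=\pi_B$, the constraint forces the four pairs $(i_k,j_k)$ to collapse to two distinct pairs, say $(a,c)$ and $(b,d)$, each occurring twice, so the term equals $\sum_{a,b,c,d}N_{ac}^2N_{bd}^2=\big(\sum_{a,c}N_{ac}\big)\big(\sum_{b,d}N_{bd}\big)=m^2$. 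If $\pi_A\ne\pi_B$, their common refinement is the discrete (singleton) partition, so the four pairs $(i_k,j_k)$ are distinct, while only two $i$-values $a,b$ and two $j$-values $c,d$ occur; thus they are exactly $(a,c),(a,d),(b,c),(b,d)$, and the term equals $\sum_{a,b,c,d}N_{ac}N_{ad}N_{bc}N_{bd}=\sum_{a,b}\langle N_a,N_b\rangle^2\le\sum_{a,b}\|N_a\|^2\|N_b\|^2=\big(\sum_a r_a\big)^2=m^2$ by Cauchy--Schwarz. Summing the $9$ contributions gives $\E[\RV^4]\le 9m^2\le 9n^4$, since $m\le n^2$ always. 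I expect the only mildly delicate point to be the combinatorial bookkeeping in the middle paragraph — verifying the three-partition domination and that $\pi_A\ne\pi_B$ forces the four-cycle index pattern $(a,c),(a,d),(b,c),(b,d)$; the remaining manipulations are routine.
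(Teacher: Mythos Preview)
Your proof is correct. Both you and the paper expand $\E[\RV^4]$, factor the expectation as $\E\big[\prod_k A_{i_k}\big]\,\E\big[\prod_k B_{j_k}\big]$ by independence, and observe that these Rademacher products are $0$ or $1$, equaling $1$ exactly when each index appears with even multiplicity. From there the paper takes the shorter route: it simply counts the $4$-tuples $(i_1,\dots,i_4)$ with all even multiplicities (at most $6\binom{n}{2}+n\le 3n^2$), does the same for the $j$'s, and bounds each surviving term by $1$ in absolute value, obtaining $\E[\RV^4]\le(3n^2)^2=9n^4$ directly.

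Your argument is a genuine refinement. Dominating the even-multiplicity indicator by a sum over the three pair-partitions of $\{1,2,3,4\}$ gives nine structured sums, each of which you bound by $m^2$ (trivially when $\pi_A=\pi_B$, via Cauchy--Schwarz when $\pi_A\neq\pi_B$). This extra work buys you the sharper conclusion $\E[\RV^4]\le 9m^2$, which is strictly stronger than the paper's $9n^4$ whenever $m<n^2$. For the claim as stated the paper's direct count is simpler; however, feeding your bound into the Paley--Zygmund step yields
\[
\Pr\!\big[|\RV|>\tfrac{\sqrt m}{2}\big]\ \ge\ \Big(1-\tfrac14\Big)^2\frac{(\E[\RV^2])^2}{\E[\RV^4]}\ \ge\ \frac{9}{16}\cdot\frac{m^2}{9m^2}\ =\ \frac{1}{16}
\]
unconditionally, i.e., without invoking the hypothesis $m\ge\gamma n^2$. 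So your route actually strengthens \cref{lem:anti-concentration} at the cost of a slightly longer fourth-moment computation.
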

\begin{proof}
We use the definition of $\RV$, write it out as a sum, and multiply out the terms of the product:
\begin{align}
\E[\RV^4]
&=\E[(A^T M B)^4]
=\E\Big[\Big(\sum_{i,j\in[n]}M_{ij}Z_{ij}\Big)^4\Big]\nonumber\\
&=\sum_{(i_1,j_1),\dots,(i_4,j_4)\in[d]\times[d]} M_{i_1j_1}M_{i_2j_2}M_{i_3j_3}M_{i_4j_4}
       \  \E[Z_{i_1j_1}Z_{i_2j_2}Z_{i_3j_3}Z_{i_4j_4}],\label{eq:4-term-expectation}
       \end{align}
where \eqref{eq:4-term-expectation} is obtained by using the linearity of expectation. Next, we evaluate the expectation of the product in \eqref{eq:4-term-expectation}:
\begin{align*}
\E[Z_{i_1j_1}Z_{i_2j_2}Z_{i_3j_3}Z_{i_4j_4}]
&=\E[A_{i_1}B_{j_1}A_{i_2}B_{j_2}A_{i_3}B_{j_3}A_{i_4}B_{j_4}]\nonumber\\
&=  \E[A_{i_1}A_{i_2}A_{i_3}A_{i_4}] \ \E[B_{j_1}B_{j_2}B_{j_3}B_{j_4}],
       \end{align*}
where the last equality follows by independence of $A_i$ and $B_j$ for all $i,j\in[n].$
The expression $\E[A_{i_1}A_{i_2}A_{i_3}A_{i_4}]$ is 0 if
at least one of the indices appears only once in the tuple $(i_1,i_2,i_3,i_4)$,  since, in this case, we can use the independence of the corresponding factor $A_i$ from the remaining factors to represent this expression as $\E[A_i]$ multiplied by the expectation of the product of the remaining factors. Since $\E[A_i]=0$ for all $i\in[n]$, the overall expression evaluates to 0. 

Note that if one of the factors appears exactly three times, then another factor appears exactly once. Therefore, the remaining case is when each factor appears an even number of times. If there are two factors, say $A_i$ and $A_j$ that appear twice, then the expression evaluates to $\E[A_i^2A_j^2]=1$. It also evaluates to 1 when $i=j$. 

Thus, each term in \eqref{eq:4-term-expectation} is either 0 or 1. By symmetry, it can potentially be 1 only if each index in the tuple $(i_1,i_2,i_3,i_4)$ and each index in the tuple $(j_1,j_2,j_3,j_4)$ appears an even number of times.
It remains to give an upper bound on the number of such terms.
There are $n\choose 2$ ways to choose two distinct $i$-indices and ${4\choose 2}=6$ possible positions for them in the 4-tuple. In addition, there are $n$ ways to choose an index that appears 4 times in the 4-tuple. So, the number of possibilities for nonzero $\E[A_{i_1}A_{i_2}A_{i_3}A_{i_4}]$ is at most $3n^2.$ The same bounds holds for $\E[B_{j_1}B_{j_2}B_{j_3}B_{j_4}]$. Consequently, the number of terms equal to 1 in \eqref{eq:4-term-expectation} is at most $9n^4$. Thus, the sum evaluates to at most $9n^4$. This completes the proof of~\cref{claim:4th-moment}.
\end{proof}

Since $\RV^2$ is a nonnegative random variable with finite variance,  the Paley-Zygmund inequality gives that, for all $\theta\in[0,1]$,
$$\Pr\big[\RV^2>\theta\ \E[\RV^2]\big]
\geq (1-\theta)^2 \frac{(\E[\RV^2])^2}{\E[\RV^4]}
\geq (1-\theta)^2 \frac{m^2}{9n^4}
\geq (1-\theta)^2 \frac{(\gamma n^2)^2}{9n^4}
= (1-\theta)^2 \frac{\gamma^2}{9},
$$
where the last inequality uses the bound $m\geq \gamma n^2$ stated in the lemma. Finally,
we set $\theta=1/4$ and get:
\begin{align*}
\Pr\Big[|A^T M B|>\frac{\sqrt{m}}2\Big]
=\Pr\Big[|\RV|>\frac{\sqrt{m}}2\Big]
=\Pr\Big[\RV^2>\frac{m}4\Big]
\geq \frac{3^2}{4^2} \frac{\gamma^2}9
=\frac {\gamma^2}{16},
\end{align*}
completing the proof of~\cref{lem:anti-concentration}.
\end{proof}

\subsection{Reconstruction Attack Using Outer-Product Queries}\label{sec:attacks}
To simplify notation in this section, we represent our datasets and outer-product queries as vectors. Formally, $X$ here denotes the vectorization of the original sensitive dataset, i.e., a vector in $\{0,1\}^{n^2}.$ For an outer-product query $(A,B),$ we let $Q\in\{0,1\}^{n^2}$ represent the vectorization of $A\otimes B$, the outer product of $A$ and~$B$. (In other words, $Q$ is the Kronecker product of $A$ and~$B$.) Then the answer to the query is the dot product $Q\cdot X.$

In this section, we define and analyze the attacker's algorithm $\cC$ and complete the proof of \cref{thm:noninteractiv-lb}. The attacker $\cC$ runs algorithm $\cB$ from~\cref{sec:noninteractive-reduction} on the sensitive dataset $X$ and a set of $k$ random outer-product queries $Q_1,\dots, Q_k$ to obtain answers $a_1,\dots,a_k.$ For all $\ell\in[k]$, we call the answer $a_\ell$ {\em accurate for a dataset $Y$} if $|Q_\ell\cdot Y -a_\ell|\leq {\frac{\sqrt{\gamma}n}{4}}$; otherwise, we call $a_\ell$ {\em inaccurate for $Y$}. The attacker $\cC$ outputs any dataset $Y^*\in\{0,1\}^{n^2}$ for which at most {$\frac{\gamma^2 k}{64}$}  answers among $a_1,\dots,a_k$ are inaccurate for $Y^*.$ By \cref{lem:answering-queries}, the probability that $X$ satisfies this requirement is at least $\frac 5 6$. If this event occurs, algorithm~$\cC$ will be able to output some $Y^*.$ (Otherwise, the attack fails.)

{Next, we analyze the attack. Let $\|X-Y\|_1$ denote the Hamming distance between datasets $X$ and $Y$. Call a dataset $Y$ {\em bad} if $\|X-Y\|_1 > \gamma n^2$, i.e., if it differs from $X$ on more than $\gamma n^2$ entries. We will show that $\cC$  is unlikely to choose a bad dataset as $Y^*.$}
Fix a bad dataset $Y$.
Let $M=X-Y,$ and observe that $M$ has $m>\gamma n^2$ nonzero entries.
We say that a set of queries $\{Q_1,\dots, Q_k\}$ {\em catches} the  dataset $Y$ if more than {$\frac {\gamma^2 k}{32}$} entries in $(|Q_1\cdot M|,\dots, |Q_k\cdot M|)$ exceed {$\frac{\sqrt{\gamma}n}{2}$}. 
\begin{lem}\label{lemma:failure-to-catch}
Suppose the attacker $\cC$ makes $k=\frac{128n^2}{\gamma^2}$ uniformly random outer-product queries.
Then the probability that there exists a bad dataset not caught by the attacker's set of queries 
is at most $\frac 1 6$.
\end{lem}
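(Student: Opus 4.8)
The plan is to bound the probability that any fixed bad dataset $Y$ is \emph{not} caught, and then take a union bound over all bad datasets. For a fixed bad $Y$, write $M = X - Y$, so $M$ has $m > \gamma n^2$ nonzero entries with $M_{ij} \in \{-1,0,1\}$. By \cref{lem:anti-concentration}, each single random outer-product query $Q$ satisfies $\Pr[|Q \cdot M| > \sqrt{m}/2] \geq \gamma^2/16$, and since $m > \gamma n^2$ we have $\sqrt{m}/2 > \sqrt{\gamma}n/2$, so in fact $\Pr[|Q\cdot M| > \sqrt{\gamma}n/2] \geq \gamma^2/16$. Thus each of the $k$ independent queries ``flags'' $Y$ with probability at least $p := \gamma^2/16$, so the number of flagging queries stochastically dominates a $\mathrm{Bin}(k, p)$ random variable with mean $pk = \gamma^2 k/16$. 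The event that $Y$ is \emph{not} caught is the event that fewer than $\gamma^2 k/32 = pk/2$ queries flag it, i.e. the Binomial falls below half its mean.

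The main work is a Chernoff bound: $\Pr[\mathrm{Bin}(k,p) < pk/2] \leq \exp(-pk/8)$. With $k = 128n^2/\gamma^2$ we get $pk = (\gamma^2/16)(128n^2/\gamma^2) = 8n^2$, so this probability is at most $\exp(-n^2)$. Then I take a union bound over all bad datasets $Y \in \{0,1\}^{n^2}$; there are at most $2^{n^2}$ of them (a crude bound suffices, no need to count only those at Hamming distance $> \gamma n^2$). Hence the probability that \emph{some} bad dataset escapes all the queries is at most $2^{n^2} \cdot e^{-n^2} = (2/e)^{n^2} < 1$, and in particular $\leq 1/6$ for all $n \geq 1$ (since $(2/e)^{n^2} \leq 2/e < 1/6$ already fails — one should check: $2/e \approx 0.736$, which is $> 1/6$, so I need to be slightly more careful).

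To make the union bound go through cleanly I would sharpen one of the constants: either observe $2^{n^2} e^{-n^2}$ is already $o(1)$ and for small $n$ ($n = 1, 2$) check directly, or — cleaner — use the Chernoff bound $\Pr[\mathrm{Bin}(k,p) < pk/2] \leq \exp(-pk/8)$ together with $pk = 8n^2$ to get the failure probability at most $2^{n^2} \cdot e^{-n^2}$, and note $e^{-n^2} 2^{n^2} = e^{-n^2(1 - \ln 2)} \leq e^{-n^2/4}$, which is at most $e^{-1/4} \approx 0.78$ for $n=1$; so I would instead strengthen $k$ slightly or tighten the Chernoff exponent (the multiplicative Chernoff bound gives exponent $pk/8$ for the lower tail at half the mean, but one can also use $pk \cdot (\ln 2 + \text{slack})$). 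In the write-up I would simply choose the Chernoff constant so that the per-$Y$ failure probability is at most, say, $2^{-2n^2}/6$ — i.e. pick $k$ large enough (a constant factor more than $128n^2/\gamma^2$, or absorb it since the lemma already fixes $k = 128 n^2/\gamma^2$ and $pk = 8n^2$ gives headroom: $2^{n^2} \cdot 2^{-pk/(8\ln 2)} = 2^{n^2 - 8n^2/(8 \ln 2)} = 2^{n^2(1 - 1/\ln 2)} = 2^{-n^2(1/\ln 2 - 1)} \leq 2^{-0.44 n^2} \leq 2^{-0.44}$, still not $\leq 1/6$ for $n=1$).

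The genuine obstacle, then, is purely bookkeeping of constants at small $n$: for $n$ larger than a small absolute constant the bound $2^{n^2}e^{-n^2} \to 0$ trivially beats $1/6$, and the handful of small cases can be absorbed by noting that the theorem's graph family may be taken to start at some $n_0$, or by a marginally larger choice of $k$. Conceptually there is nothing deep here beyond: (i) \cref{lem:anti-concentration} gives a constant lower bound on the per-query catch probability, (ii) independence of the $k$ queries plus a Chernoff bound makes the catch probability for a fixed bad $Y$ doubly-exponentially close to $1$, and (iii) this beats the $2^{n^2}$ union bound over bad datasets. I would present it in exactly that order.
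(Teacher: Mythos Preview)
Your approach is correct and essentially identical to the paper's: fix a bad $Y$, invoke \cref{lem:anti-concentration} to get per-query success probability at least $\gamma^2/16$, apply a multiplicative Chernoff bound to the sum of the $k$ independent indicators to obtain failure probability $\exp(-\gamma^2 k/128)=\exp(-n^2)$ for that $Y$, and union-bound over the at most $2^{n^2}$ datasets. The small-$n$ bookkeeping that worries you is handled in the paper simply by writing ``for $n\geq 3$'' before the inequality $\exp(-n^2) < \tfrac{1}{6\cdot 2^{n^2}}$; since the ambient theorem is asymptotic, this is all that is needed and there is no reason to tinker with $k$ or the Chernoff exponent.
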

\begin{proof}
Consider a set of $k$  uniformly random outer-product queries $\{Q_\ell\}_{\ell\in[k]}$. Fix a bad dataset $Y$. Then {$\|X-Y\|_1> \gamma n^2$.}
Let $M=X-Y.$

 For every $\ell\in[k]$, let $\chi_\ell=1$ if $\left\lvert Q_\ell\cdot M\right\rvert>{\frac{\sqrt{\gamma}n}{2}}$, and otherwise let $\chi_\ell=0$. Also, let $\chi=\sum_{\ell=1}^k \chi_\ell.$
By definition, 
the difference vector $M=X-Y$ has more than  $\gamma n^2$ nonzero entries. By the anti-concentration bound in \cref{lem:anti-concentration}, $\Pr\Big[|Q_\ell\cdot M|>\frac{\sqrt \gamma n}{2}\Big]\geq \frac{\gamma^2}{16}$.
Therefore, $\expect[\chi_\ell]\geq \frac{\gamma^2}{16}$. 
\ificalp
By the Chernoff bound, 
\else
By the Chernoff bound 
(stated in \cref{prel:chernoff}),  
\fi
we have that for $k=\frac{128n^2}{\gamma^2}$ and for $n\geq 3$,
\[
\Pr\Big[ \chi\leq \frac{\gamma^2\cdot k}{32}\Big]\leq \exp\Big(-\frac{\gamma^2k}{128}\Big)=\exp\Big(-n^2 \Big)< \frac{1}{6\cdot 2^{n^2}}. 
\]
Hence, the set $\{Q_\ell\}_{\ell\in[k]}$ fails to catch each specific bad dataset  with probability at most $\frac{1}{6\cdot 2^{n^2}}$. By  a union bound over at most $2^{n^2}$ bad datasets, the probability that there exists a bad dataset not caught by the attacker's queries is at most 1/6.
\end{proof} 

\begin{lem}[Reconstruction Lemma]\label{lem:reconstruct} 
If algorithm $\cB$  has additive error at most {$\frac{\sqrt{\gamma} n}{4}$}  on all but at most {$\frac{\gamma^2 k}{64}$} answers, and the set of queries it uses catches all bad datasets $Y$, then the reconstruction attack is successful, that is, the attacker  $\cC$ outputs $Y^*$ that differs from $X$ on at most $\gamma n^2$ entries, {i.e.,  $\|X-Y^*\|_1\leq \gamma n^2$}.
\end{lem}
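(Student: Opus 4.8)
The plan is to argue by contradiction. Suppose the attacker's output $Y^*$ is \emph{bad}, meaning $\|X-Y^*\|_1>\gamma n^2$, and derive a contradiction from the two hypotheses of the lemma. The first thing I would record is that, under the error hypothesis on $\cB$, the dataset $X$ itself is a feasible output for $\cC$: by assumption at most $\frac{\gamma^2 k}{64}$ of the answers $a_\ell$ have $|Q_\ell\cdot X-a_\ell|>\frac{\sqrt\gamma n}{4}$, i.e.\ at most $\frac{\gamma^2 k}{64}$ answers are inaccurate for $X$, so $\cC$ does output \emph{some} $Y^*$, and by the definition of $\cC$ at most $\frac{\gamma^2 k}{64}$ answers are inaccurate for $Y^*$ as well. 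I would flag this explicitly, since otherwise the phrase ``$\cC$ outputs $Y^*$'' is not obviously meaningful.

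The core step uses the catching hypothesis together with the triangle inequality. Put $M=X-Y^*$; since $Y^*$ is bad, $M$ has more than $\gamma n^2$ nonzero entries, so by assumption the query set catches $Y^*$ --- strictly more than $\frac{\gamma^2 k}{32}$ indices $\ell$ satisfy $|Q_\ell\cdot M|>\frac{\sqrt\gamma n}{2}$. For each such $\ell$,
$$\frac{\sqrt\gamma n}{2}<|Q_\ell\cdot X-Q_\ell\cdot Y^*|\le |Q_\ell\cdot X-a_\ell|+|a_\ell-Q_\ell\cdot Y^*|,$$
so at least one summand on the right exceeds $\frac{\sqrt\gamma n}{4}$; that is, $\ell$ is inaccurate for $X$ or inaccurate for $Y^*$. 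Hence the number of caught indices is at most (number inaccurate for $X$) $+$ (number inaccurate for $Y^*$) $\le \frac{\gamma^2 k}{64}+\frac{\gamma^2 k}{64}=\frac{\gamma^2 k}{32}$, contradicting that strictly more than $\frac{\gamma^2 k}{32}$ indices are caught. Therefore $Y^*$ cannot be bad, i.e.\ $\|X-Y^*\|_1\le\gamma n^2$, as claimed.

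There is no genuine technical obstacle here --- it is a short double-counting argument --- so the ``hard part'' is purely bookkeeping: one must make sure the constants are aligned so that the inequalities go through with equality to spare. Concretely, the accuracy radius $\frac{\sqrt\gamma n}{4}$ guaranteed for $\cB$ (and used in the definition of ``accurate for $Y^*$'') has to be exactly half the catch radius $\frac{\sqrt\gamma n}{2}$, so that the triangle-inequality split is valid, and the two ``failure buckets'' of size $\frac{\gamma^2 k}{64}$ each must sum to exactly the catch threshold $\frac{\gamma^2 k}{32}$. These are precisely the constants fixed in \cref{lem:answering-queries} and in the definition of the attacker $\cC$, so the pieces fit; the remaining work in the section is just to combine this reconstruction lemma with \cref{lem:answering-queries}, \cref{lemma:failure-to-catch}, and the anti-concentration bound \cref{lem:anti-concentration} to contradict privacy of $\cB$ and thereby conclude \cref{thm:noninteractiv-lb}.
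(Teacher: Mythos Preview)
Your proof is correct and follows essentially the same approach as the paper's: both argue by contradiction that if $Y^*$ were bad, the catching hypothesis would force more than $\frac{\gamma^2 k}{32}$ indices with $|Q_\ell\cdot(X-Y^*)|>\frac{\sqrt\gamma n}{2}$, while the triangle inequality together with the accuracy bounds for $X$ and $Y^*$ caps that count at $\frac{\gamma^2 k}{64}+\frac{\gamma^2 k}{64}=\frac{\gamma^2 k}{32}$. Your write-up is slightly more explicit than the paper's about why $\cC$ is guaranteed to output something at all, which is a nice touch.
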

\begin{proof}
By the first premise of the lemma, the dataset $X$ ``disagrees" with at most {$\frac{\gamma^2 k}{64}$} of the answers~$a_\ell$. Hence, necessarily, the attacker $\cC$ outputs some dataset $Y^*$.
 Assume towards a contradiction that $Y^*$ is a bad dataset.
Let $\{Q_\ell\}_{\ell\in[k]}$ be the set of queries chosen by $\cB$. Let $M^*=X-Y^*$ be the difference vector.
By the triangle inequality,   
$|Q_\ell M^*|=|Q_\ell X-Q_\ell Y^*|
\leq |Q_\ell X -a_\ell|+|Q_\ell Y^*-a_\ell|$. From the first assumption in the lemma,   $|Q_\ell X-a_\ell|\leq \frac{\sqrt \gamma n}{4}$ 
for all but at most $\frac{\gamma^2 k}{64}$ of the queries.
By the description of the attack~$\cC$,  the output $Y^*$ is such that for all but  at most  {$\frac{\gamma^2 k}{64}$} of the queries, $|Q_\ell Y^*-a_\ell|\leq {\frac{\sqrt \gamma n}{4}}$. Therefore, for all but at most $\frac{\gamma^2 k}{32}$ of the queries, 
$|Q_\ell M^*|\leq |Q_\ell X -a_\ell|+|Q_\ell Y^*-a_\ell|\leq  \frac{\sqrt{\gamma} n}2$. 
 Since $\{Q_\ell\}_{\ell\in[k]}$ catches all bad datasets, it in particular catches $Y^*$, because  $Y^*$ is bad. By definition of catching, $|Q_\ell M^*|> \frac{\sqrt{\gamma} n}2$ for more than $\frac{\gamma^2 k}{32}$ of the values $Q_\ell M^*$. Hence, we have  reached a contradiction, implying that $Y^*$ is a good dataset. 
\end{proof}

The final ingredient for proving \cref{thm:noninteractiv-lb} is the following lemma, which is based on an argument of~\cite{anindya_lb}. Any algorithm that outputs a large fraction of its secret dataset is definitely not private, for any reasonable notion of privacy. 
\cref{lem:anindya-exp}
 states that such an algorithm is not differentially private.

\begin{lem}\label{lem:anindya-exp}
Let $\cC$ be an algorithm that takes as input a secret dataset $X$ in $\{0,1\}^N$ and outputs a vector in the same set, $\{0,1\}^N$. If $\cC$ is $(\eps,\delta)$-differentially private and $X$ is uniformly distributed in $\{0,1\}^N$, then 
$$\E\left[\|\cC(X)-X\|_1\right] \geq e^{-\eps} \paren{\tfrac 1 2 - \delta} N  \, .$$
\end{lem}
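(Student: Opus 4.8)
The plan is to prove the contrapositive-flavored bound directly, by a coupling/averaging argument over neighboring datasets. The key identity I would establish first is that, for any fixed coordinate $i\in[N]$,
$$\E\left[\|\cC(X)-X\|_1\right] = \sum_{i=1}^N \Pr[\cC(X)_i \neq X_i],$$
by linearity of expectation. So it suffices to show $\Pr[\cC(X)_i \neq X_i] \geq e^{-\eps}(\tfrac12-\delta)$ for each $i$, and then sum over the $N$ coordinates.

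To bound a single coordinate's error probability, fix $i$ and condition on all other coordinates $X_{-i}$. Let $X^{(0)}$ and $X^{(1)}$ denote the two completions of $X_{-i}$ with $X_i = 0$ and $X_i=1$ respectively; these are neighboring datasets. Since $X$ is uniform, conditioned on $X_{-i}$ the bit $X_i$ is a fair coin, so
$$\Pr[\cC(X)_i \neq X_i \mid X_{-i}] = \tfrac12\Pr[\cC(X^{(0)})_i = 1] + \tfrac12\Pr[\cC(X^{(1)})_i = 0].$$
Now I would invoke $(\eps,\delta)$-differential privacy on the event $\{$output has $i$-th coordinate equal to $0\}$: we get $\Pr[\cC(X^{(1)})_i = 0] \geq e^{-\eps}\left(\Pr[\cC(X^{(0)})_i = 0] - \delta\right)$. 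Writing $p = \Pr[\cC(X^{(0)})_i = 1]$, so $\Pr[\cC(X^{(0)})_i = 0] = 1-p$, the quantity above is at least $\tfrac12 p + \tfrac12 e^{-\eps}(1-p-\delta)$. Since $e^{-\eps}\le 1$, this is at least $\tfrac12 e^{-\eps} p + \tfrac12 e^{-\eps}(1-p-\delta) = e^{-\eps}(\tfrac12 - \tfrac{\delta}{2}) \geq e^{-\eps}(\tfrac12-\delta)$. This holds for every $X_{-i}$, hence unconditionally, and summing over $i$ gives the claim.

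The main subtlety — not really an obstacle but the point to be careful about — is getting the direction of the DP inequality right and choosing the right event, so that the $e^{-\eps}$ factor multiplies the "useful" probability rather than being wasted. One also has to make sure the bound is symmetric in the roles of $0$ and $1$: the step above used privacy only in one direction, relating $\cC(X^{(1)})$ to $\cC(X^{(0)})$, and then bounded $\tfrac12 p$ trivially by $\tfrac12 e^{-\eps}p$; alternatively one could average the two symmetric applications of privacy, but the asymmetric version already suffices and is cleaner. Everything else is linearity of expectation and the law of total probability, so there is no heavy calculation to grind through.
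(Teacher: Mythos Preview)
Your proposal is correct and follows essentially the same approach as the paper: reduce to a per-coordinate bound $\Pr[\cC(X)_i \neq X_i] \geq e^{-\eps}(\tfrac12-\delta)$ via the DP guarantee between the two neighbors differing in bit $i$, then sum over $i$. The paper packages the conditioning step slightly differently---it compares the pair $(X,\cC(X))$ to $(X_{i\to R},\cC(X))$ for an independent uniform bit $R$ and applies DP to the event $\{x_i\neq y_i\}$---but unwinding that statement amounts to exactly your computation on $X^{(0)},X^{(1)}$; your explicit version even yields the marginally sharper constant $e^{-\eps}(\tfrac12-\tfrac{\delta}{2})$.
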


\cref{lem:anindya-exp} above only bounds the expectation of $\|\cC(X)-X\|_1$. The more sophisticated argument in \cite{anindya_lb} yields much tighter concentration results. We use the simpler version here since it allows for a self-contained presentation.

\begin{proof} \newcommand{\subst}[3]{{#1}_{{#2} \to {#3}}}
    Fix an index $i \in [N]$ and a bit $r\in\{0,1\}$. Let $\subst{X}{i}{r}$ denote the vector obtained by replacing the $i$-th entry of $X$ with the bit $r$. 
    
    Consider the  pair of random variables $(X,\cC(X))$. Because $\cC$ is $(\eps,\delta)$-differentially private, this is distributed similarly to the pair $(\subst{X}{i}{R},\cC(X))$, where $R$ is a uniformly random bit independent of the other values. Specifically, for any event $E \subseteq \bit^N \times \bit^N$,  $$\Pr[(\subst{X}{i}{R},\cC(X)) \in E] \leq e^{\eps}\Pr[(X,\cC(X)) \in E] + \delta. $$
    Applying this inequality to the event $E_i = \{(x,y): x_i \neq y_i \}$ shows that
    $$   \tfrac 1 2 = \Pr[\cC(X)_i\neq R]\leq e^{\eps} \Pr[\cC(X)_i\neq X_i] + \delta \, \quad \text{and thus} \quad 
    \Pr[\cC(X)_i\neq X_i] \geq e^{-\eps}(\tfrac 1  2 -\delta)\, .$$
    
    The  Hamming distance $\|\cC(X)-X\|_1$ is the  sum of the indicator random variables for the events $\cC(X)_i\neq X_i$. By linearity of expectation, the expected Hamming distance is at least $e^{-\eps}\paren{\frac 1 2 -\delta}N$.
\end{proof}

Finally, we use~\cref{lem:answering-queries,lemma:failure-to-catch,lem:reconstruct,lem:anindya-exp} 
to compete the proof of the main theorem.
\begin{proof}[Proof of Theorem~\ref{thm:noninteractiv-lb}]
{We set $\gamma=\frac 1 9.$}
Assume towards a contradiction that for some $\eps$ and $\delta$ as in the statement of the theorem, there exists an $(\eps,\delta)$-LEDP algorithm $\alg$ that for every ${3n}$-node graph approximates the number of triangles in the  graph up to additive error {$\alpha=\frac{\sqrt{\gamma} n^2}{20}$} with probability   at least $1-\frac{\gamma^2}{{9\cdot 128}}=1-\frac 1 {3^6\cdot 2^{7}}$. 
Then by \cref{lem:answering-queries}, there exists a $(2\eps,2\delta)$-DP algorithm $\cB$  that, for every secret dataset $X$ and every set of $k$ outer-product queries, %
 answers inaccurately (i.e., with additive error  {more than $\frac{\sqrt \gamma n}{4}$}) on at most {$\frac{\gamma^2 k}{64}$} of the $k$ queries with probability at least $\frac 5 6$.
By \cref{lemma:failure-to-catch}, the probability that a set of $k=\frac{128n^2}{\gamma^2}$ random outer-product queries chosen by the attacker $\cC$ 
does not catch all bad datasets
is at most $\frac 1 6$.
By a union bound, with probability at least $\frac 2 3$, the attacker $\cC$ satisfies the premise of \cref{lem:reconstruct} and the set of chosen queries catches all bad datasets. Hence, with probability at least $\frac 2 3$, the attacker $\cC$ outputs a dataset $Y^*$ which coincides with $X$ on at least $(1-\gamma) n^2$ entries. The expected Hamming distance $\E[\|X-Y^*\|_1]$ is therefore at most $\frac 2 3 \cdot \gamma n^2 + \frac 1 3 n^2 = \frac{1+2\gamma}{3} n^2$. When $\gamma =\frac 1 9$, the expected distance is less than $0.41 n^2$.

Recall that the attacker $\cC$ runs $(2\eps, 2\delta)$-DP algorithm $\cB$ on a secret dataset $X$ and then post processes the output of $\cB$. Thus, $\cC$
is $(2\eps, 2\delta)$-DP,
and we can apply \cref{lem:anindya-exp} to conclude that the expected Hamming distance $\E\|\cC(X)-X\|_1$ is at most $e^{-2\eps}(\frac 1 2 -2\delta)n^2$. Since, by assumption, $\eps\leq 1/20$ and $\delta \leq 1/100$, we have $\E\|\cC(X)-X\|_1 \geq 0.43 n^2$. This contradicts the upper bound of $0.41 n^2$ above. 
\end{proof}

\ificalp

\else

\section{Noninteractive Triangle Counting}\label{sec:one-round-upper}

In this section, we prove~\cref{thm:noninteractiv-ub-intro}.
In~\cref{alg:one-round}, we state a simple algorithm 
based on Randomized Response that obtains
an unbiased estimate of the number of triangles in the graph. 
We use ${{[n]}\choose \ell}$ to represent the set of unordered $\ell$-tuples of vertices.
Let the (fixed) variable $\indicator_{\{i, j, k\}}$ be 1 if $\{i, j, k\}$ is a triangle 
in $G$ and 0 otherwise.
Similarly, let 
$\indicator_{\{i, j\}} = 1$ if $\{i, j\} \in E$ and $\indicator_{\{i, j\}} = 0$
otherwise. Then $\indicator_{\{i, j, k\}}=\indicator_{\{i, j\}}\cdot\indicator_{\{i, k\}}\cdot\indicator_{\{j, k\}}$

The main idea in our postprocessing procedure is to rescale the noisy edges returned by Randomized Response, so that the rescaled random variable $Y_{\{i, j\}}$ for each edge ${\{i, j\}}$
has expectation exactly $\indicator_{\{i, j\}}.$

\begin{algorithm}[hbtp]
    \caption{Triangle Counting via Randomized Response}%
    \label{alg:one-round}
\textbf{Input:} Graph $G = ([n], E)$
represented by an $n\times n$ adjacency matrix $\Adj$ with entries $a_{ij}$, privacy parameter $\eps > 0$.\\
\textbf{Output:} Approximate number of triangles in $G$.
\begin{algorithmic}[1]
    \For{$i = 1$ to $n$}
        \State {\bf Release} 
        $(X_{\{i, i+1\},}\dots, X_{\{i, n\}})
        \gets\rr_{\eps}([a_{i(i+1)}, \dots, a_{in}])$. \label{or-step:e-out}
    \EndFor
      \State  For all $\{i,j\} \in {[n]\choose 2}$, set
    $Y_{\{i, j\}} \gets \frac{X_{\{i, j\}} \cdot (e^{\eps} + 1) - 1}{e^{\eps} - 1}$.
    \State For all $\{i, j, k\} \in {[n] \choose 3}$, set  $Z_{\{i, j, k\}} \leftarrow Y_{\{i, j\}} \cdot Y_{\{j, k\}} \cdot Y_{\{i, k\}}$.
    \State {\bf Return} $\displaystyle\That \leftarrow \sum_{\{i,j,k\}\in {[n] \choose 3}} Z_{\{i, j, k\}}$.\label{line:return-T-RR}
\end{algorithmic}
\end{algorithm}

\cref{alg:one-round} is $\eps$-\ledp by~\cref{thm:rr-ldp}.
Next, we show that the estimate returned by our algorithm is unbiased. 

\begin{lem}\label{lem:one-round-unbiased}
    \cref{alg:one-round} returns an unbiased estimate of the number of 
    triangles in the input graph.
\end{lem}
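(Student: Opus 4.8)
The plan is to show that $\E[\That]=\sum_{\{i,j,k\}\in\binom{[n]}{3}}\indicator_{\{i,j,k\}}$, which by definition is the number of triangles in $G$. The argument has three steps: (i) compute the expectation of a single rescaled edge variable $Y_{\{i,j\}}$; (ii) observe that the edge variables are mutually independent, so that the triple product $Z_{\{i,j,k\}}$ factorizes in expectation; and (iii) conclude by linearity of expectation.

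For step (i), recall that $\rr_\eps$ retains each input bit with probability $\frac{e^\eps}{e^\eps+1}$ and flips it otherwise. Hence $X_{\{i,j\}}$ equals $a_{ij}$ with probability $\frac{e^\eps}{e^\eps+1}$ and $1-a_{ij}$ with probability $\frac{1}{e^\eps+1}$, so a short case split on $a_{ij}\in\{0,1\}$ gives $\E[X_{\{i,j\}}]=\frac{1+a_{ij}(e^\eps-1)}{e^\eps+1}$. Substituting this into the definition $Y_{\{i,j\}}=\frac{X_{\{i,j\}}(e^\eps+1)-1}{e^\eps-1}$ and using linearity of expectation yields $\E[Y_{\{i,j\}}]=a_{ij}=\indicator_{\{i,j\}}$; indeed the rescaling in the algorithm is designed precisely so that this holds.

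For step (ii), the key structural observation is that each unordered pair $\{i,j\}$ is released exactly once — in iteration $i'=\min(i,j)$ of the loop in \cref{alg:one-round} — and $\rr_\eps$ applies independent noise to each coordinate of its input. Consequently, the family $\{X_{\{i,j\}}\}_{\{i,j\}\in\binom{[n]}{2}}$ is mutually independent, and hence so is $\{Y_{\{i,j\}}\}$, since each $Y_{\{i,j\}}$ is a deterministic function of the corresponding $X_{\{i,j\}}$. In particular, for any triple $\{i,j,k\}$ the variables $Y_{\{i,j\}},Y_{\{j,k\}},Y_{\{i,k\}}$ are independent, so $\E[Z_{\{i,j,k\}}]=\E[Y_{\{i,j\}}]\,\E[Y_{\{j,k\}}]\,\E[Y_{\{i,k\}}]=\indicator_{\{i,j\}}\indicator_{\{i,k\}}\indicator_{\{j,k\}}=\indicator_{\{i,j,k\}}$, using the identity $\indicator_{\{i,j,k\}}=\indicator_{\{i,j\}}\cdot\indicator_{\{i,k\}}\cdot\indicator_{\{j,k\}}$ noted before the lemma. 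Summing over all triples, $\E[\That]=\sum_{\{i,j,k\}\in\binom{[n]}{3}}\E[Z_{\{i,j,k\}}]=\sum_{\{i,j,k\}}\indicator_{\{i,j,k\}}$, which completes the proof. There is no real obstacle here: the only point needing care is the independence claim of step (ii), namely that each potential edge is touched by exactly one local randomizer and that the coordinates of a single Randomized Response output are independent, which is exactly what lets the expectation of the triple product split into a product of expectations; everything else is a one-line computation.
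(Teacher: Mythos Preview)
Your proof is correct and follows essentially the same approach as the paper: compute $\E[Y_{\{i,j\}}]=\indicator_{\{i,j\}}$, use mutual independence of the $Y$-variables to factor $\E[Z_{\{i,j,k\}}]$, and conclude by linearity. If anything, you are slightly more explicit than the paper in justifying the independence claim (each pair is released once and $\rr_\eps$ noises coordinates independently), which the paper simply asserts.
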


\begin{proof}
Let all random variables be defined as in~\cref{alg:one-round}. We calculate the expectation for all of them. The random variable $X_{\{i, j\}}$ is the indicator for the presence of a noisy edge $\{i,j\}$ in the graph returned by Randomized Response.
If edge $\{i, j\}\in E$, then 
    $\expect\left[X_{\{i, j\}}\right] = \frac{e^{\eps}}{e^{\eps} + 1}$; otherwise, 
    $\expect\left[X_{\{i, j\}}\right] = \frac{1}{e^{\eps} + 1}$. 
    Now, we calculate the expectation of the rescaled random variables $Y_{\{i, j\}} = \frac{X_{\{i, j\}} \cdot 
    (e^{\eps} + 1) - 1}{e^{\eps} - 1}$. If $\{i, j\} \in E$, then the expectation is
    $\expect\left[Y_{\{i, j\}}\right] = \frac{\expect[X_{\{i, j\}}] \cdot (e^{\eps} + 1) - 1}{e^{\eps} - 1} = \frac{\left(1 - \frac{1}{e^{\eps} + 1}\right) \cdot (e^{\eps} + 1) - 1}{e^{\eps} - 1} = 1$.
    If $\{i, j\} \not\in E$, then the expectation is 
    $\expect\left[Y_{\{i, j\}}\right] = \frac{\frac{1}{e^{\eps} + 1} \cdot 
    (e^{\eps} + 1) - 1}{e^{\eps} - 1} = 0$.
    Thus,  the expected value of $Y_{\{i, j\}}$ is $\indicator_{\{i, j\}},$ as desired.

   Now, consider random variables $Z_{\{i, j, k\}}$. By definition of $Z_{\{i, j, k\}}$ and mutual independence of random variables $Y_{\{i, j\}}$, 
   $$\E[Z_{\{i, j, k\}}] = \E[Y_{\{i, j\}}] \cdot \E[Y_{\{j, k\}}] \cdot \E[Y_{\{i, k\}}]
   =\indicator_{\{i, j\}}\cdot\indicator_{\{i, k\}}\cdot\indicator_{\{j, k\}}
   =\indicator_{\{i, j, k\}}.$$ 
    Finally, we compute the expectation of our triangle estimator $\That.$ Let $T$ be the number of triangles in $G$. Observe that $T=\sum_{\substack{\{i,j,k\} \in {[n] \choose 3}}} \indicator_{\{i, j, k\}}.$
    By definition of $\That$ and the linearity of expectation,
   
    \begin{align}
        \expect\left[\That\right] &= \expect\Big[\sum_{\substack{\{i,j,k\} \in {[n] \choose 3}}} Z_{\{i, j, k\}}\Big] = \sum_{\substack{\{i,j,k\} \in {[n] \choose 3}}} \expect\Big[Z_{\{i, j, k\}}\Big]
 =\sum_{\substack{\{i,j,k\} \in {[n] \choose 3}}} \indicator_{\{i, j, k\}}
 =T,
 \label{eq:expected-triangles}
    \end{align}
completing the proof that $\That$ is unbiased.
\end{proof}

To complete the analysis of \cref{alg:one-round}, we compute the variance of the estimate it returns. %

\begin{lem}\label{lem:one-round-variance}
Given a privacy parameter $\eps\in(0,1]$ and an $n$-node input graph $G$ with $C_4$ cycles of length 4,
\cref{alg:one-round} returns an approximate triangle count $\That$ such that 
$\var{\That} = \Theta\left(\frac{\cfours}{\eps^2} + \frac{n^3}{\eps^6}\right)$.
\end{lem}

\begin{proof}
We compute the variance of all random variables defined in \cref{alg:one-round}.
First, $\mathrm{Var}[X_{\{i, j\}}] = \frac{e^{\eps}}{(e^{\eps} + 1)^2},$ since $X_{\{i, j\}}$ is a Bernoulli variable.
Then, $\mathrm{Var}[Y_{\{i, j\}}] = \frac{(e^{\eps} + 1)^2}{(e^{\eps} - 
1)^2} \cdot \mathrm{Var}[X_{\{i, j\}}] = \frac{e^{\eps}}{(e^{\eps} - 1)^2}$.
To compute
$\mathrm{Var}[Z_{\{i, j, k\}}]$, we compute $\expect[Y_{\{i, j\}}^2] = 
\mathrm{Var}[Y_{\{i, j\}}] + \expect[Y_{\{i, j\}}]^2 = \frac{e^{\eps}}{(e^{\eps} - 1)^2} + \indicator_{\{i, j\}}
= \Theta\left(\frac{1}{\eps^2}\right)$.
By definition of variance and since $Z^2_{\{i, j, k\}}$ is a product of independent random variables $Y^2_{\{i, j\}},  Y^2_{\{j, k\}}$, and $Y^2_{\{i, k\}}$, we get
\begin{align}
\mathrm{Var}[Z_{\{i, j, k\}}] &= 
\expect[Z_{\{i,j,k\}}^2]-\expect[Z_{\{i,j,k\}}]^2=
\expect[Y_{\{i, j\}}^2] \cdot 
\expect[Y_{\{j, k\}}^2] \cdot \expect[Y_{\{i, k\}}^2]-\indicator_{\{i,j,k\}}^2 = \Theta\left(\frac{1}{\eps^6}\right).
\end{align}

To compute $\mathrm{Var}[\That]$, we consider the following change of variables:
$U_{\{i, j, k\}} = Z_{\{i, j, k\}} - \indicator_{\{i, j, k\}}$. 
Note that $\expect[U_{\{i,j,k\}}]=0$, and that 
since $\indicator_{\{i, j, k\}}$ is fixed, $\var{U_{\{i, j, k\}}} = \var{Z_{\{i, j, k\}}}$. 
Similarly, $\var{\That - T} = \var{\That},$ since $T$ is fixed. Thus, 
\begin{align}\label{eq:variance-of-T-hat}
\var{\That} 
&= \var{\That - T}
= \var{\sum_{\{i,j,k\} \in {[n] \choose 3}} \left(Z_{\{i, j, k\}} - \indicator_{\{i, j, k\}}\right)}
= \var{\sum_{\{i,j,k\} \in {[n] \choose 3}} U_{\{i, j, k\}}}.
\end{align}
To analyze $\var{\sum_{\{i,j,k\} \in {[n] \choose 3}} U_{\{i, j, k\}}}$, observe that the covariance of two variables $U_{\{i,j,k\}}$ and $U_{\{r,q,s\}}$ is zero if the triples $\{i,j,k\}$ and $\{r,q,s\}$ intersect in at most  one vertex (since the corresponding triangles share no edges).
Let $P_2$ be the set of pairs
of variables $U_{\{i, j, k\}}$ whose index triples  share exactly two nodes. 
Then by \eqref{eq:variance-of-T-hat},
\begin{align}
\var{\That} &= \sum_{\substack{\{i,j,k\} \in {[n] \choose 3}}} \var{U_{\{i, j, k\}}} + \sum_{(U_{\{i, j, k\}}, U_{\{j, k, l\}}) \in P_2} \expect\left[U_{\{i, j, k\}} \cdot U_{\{j, k, l\}}\right]\nonumber\\
&= \sum_{\substack{\{i,j,k\} \in {[n] \choose 3}}} \Theta\left(\frac{1}{\eps^6}\right) + \sum_{(U_{\{i, j, k\}}, U_{\{j, k, l\}}) \in P_2} \expect[(Z_{\{i, j, k\}} -\indicator_{\{i, j, k\}})(Z_{\{j, k, l\}} - \indicator_{\{j, k, l\}})] \nonumber\\
&= \Theta\left(\frac{n^3}{\eps^6}\right)
+ \sum_{(U_{\{i, j, k\}}, U_{\{j, k, l\}}) \in P_2} \left(\expect[Y_{\{i, j\}} \cdot Y_{\{j, k\}}^2 \cdot Y_{\{i, k\}} \cdot Y_{\{l,j\}} \cdot Y_{\{l, k\}}] - \indicator_{\{i, j, k\}}\indicator_{\{j, k, l\}}\right) \nonumber\\
&= \Theta\left(\frac{n^3}{\eps^6}\right)
+ \sum_{(U_{\{i, j, k\}}, U_{\{j, k, l\}}) \in P_2} \left(\indicator_{\{i, j\}} \cdot \expect[Y_{\{j, k\}}^2] \cdot \indicator_{\{i, k\}} \cdot \indicator_{\{l,j\}} \cdot \indicator_{\{l, k\}} - \indicator_{\{i, j, k\}}\indicator_{\{j, k, l\}}\right) \nonumber\\
&= \Theta\left(\frac{n^3}{\eps^6} + \frac{\cfours}{\eps^2}\right) \label{eq:triangles-last}.
\end{align}
The $\cfours$ term of~\eqref{eq:triangles-last} comes from the fact that $\expect[Y_{\{j,k\}}^2] = \Theta\left(\frac{1}{\eps^2}\right)$, and that 
$\indicator_{\{i, j\}} \cdot \indicator_{\{i, k\}} \cdot \indicator_{\{l,j\}} \cdot \indicator_{\{l, k\}} = 1$
  iff $(i, j, k, l)$ is a cycle of length four. 
\Cref{eq:triangles-last} gives our final bound.
\end{proof} 

\begin{proof}[Proof of~\cref{thm:noninteractiv-ub-intro}]
The main statement in \cref{thm:noninteractiv-ub-intro} follows from \cref{lem:one-round-unbiased,lem:one-round-variance}; the statement about
the additive error with constant
probability is obtained from the variance bound.
\end{proof}
\fi

\section{The Interactive  Lower Bound}\label{sec:interactive-lb}

In this section, we present an $\Omega\Big(\frac{n^{3/2}}{\eps}\Big)$ lower bound on the additive error of every $\eps$-\ledp algorithm for  
estimating the number of triangles in a graph, 
stated formally in \cref{thm:interactive-lb-intro}.
We reduce from the problem of computing the summation in the LDP model.

\begin{defnt}[Summation function]\label{def:sum}
Let $SUM_{n}$ be the following function. For all $x_1,\dots,x_n\in\{0,1\},$
$
SUM_{n}\left(x_1, \dots, x_n\right) = \sum_{i=1}^n x_i.
$
\end{defnt}

This problem was shown to have an additive 
error lower bound of $\Omega(\sqrt{n}/\eps)$ \cite[Theorem 5.3 of arxiv v2]{JosephMNR19}. We  substitute $\alpha = \alpha_0/n$ and $\beta = \eps \alpha_0/n$ to obtain the following lemma.

\begin{lem}[\cite{%
chan2012optimal,beimel2008distributed,JosephMNR19}]\label{lem:sum_lb}
There exists a constant $c>0$ such that for every $\eps  \in (0,1)$,  $n \in \naturals, \alpha_0\in(0,n] $ and  $\delta \in \left[0, \frac{1}{10^5} \cdot \frac{\eps^3 \alpha_0^2}{n^3 \ln(n^2/\eps \alpha_0)}\right]$, if $\cB$ is an $(\eps,\delta)$-LDP algorithm where each party $i$ receives input $x_i\in \bit{}$ and
$\cB$  estimates $SUM_{n}$
up to additive error $\alpha_0$
with probability at least $2/3$,  then %
$\alpha_0 \geq c\cdot\sqrt{n}/\eps$.
\end{lem}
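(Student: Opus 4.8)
The plan is to derive \cref{lem:sum_lb} as a parameter rescaling of a known locally private lower bound for summation, so almost all of the work is already done in the literature. For pure differential privacy and noninteractive (or sequentially interactive) protocols, the $\Omega(\sqrt n/\eps)$ bound on the additive error of locally private summation was established by Beimel, Nissim and Omri~\cite{beimel2008distributed} and Chan, Shi and Song~\cite{chan2012optimal}. The version we actually need --- \emph{fully interactive} protocols under \emph{approximate} $(\eps,\delta)$-local differential privacy --- is Theorem~5.3 (arXiv v2) of Joseph, Mao, Neel and Roth~\cite{JosephMNR19}. That theorem is phrased for estimating the population mean $\bar x = \frac1n\sum_{i\in[n]} x_i$ of $n$ bits held by $n$ parties to within additive error $\alpha$, with confidence governed by a parameter $\beta$, subject to $\delta$ being at most a fixed polynomial expression in $\eps,\alpha,\beta$ and $1/n$; its conclusion is that then $\alpha = \Omega\!\left(\tfrac{1}{\eps\sqrt n}\right)$. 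First I would state that theorem in this form.

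Next I would perform the reduction. Since $SUM_n = n\,\bar x$ (\cref{def:sum}), any $(\eps,\delta)$-LDP algorithm $\cB$ that estimates $SUM_n$ within additive error $\alpha_0$ with probability at least $2/3$ yields, by dividing its output by $n$ --- a postprocessing step that preserves $(\eps,\delta)$-local differential privacy --- an $(\eps,\delta)$-LDP algorithm that estimates $\bar x$ within additive error $\alpha := \alpha_0/n$ with probability at least $2/3$. I would then set the confidence parameter to $\beta := \eps\alpha_0/n$; the hypothesis $\alpha_0\in(0,n]$ gives $\alpha\in(0,1]$ and $\beta\in(0,\eps]$, which is the admissible range for Theorem~5.3. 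A one-line algebraic substitution ($\alpha = \alpha_0/n$, $\beta = \eps\alpha_0/n$) then shows that the displayed constraint $\delta \le \frac{1}{10^5}\cdot\frac{\eps^3\alpha_0^2}{n^3\ln(n^2/(\eps\alpha_0))}$ is exactly the rescaled form of the $\delta$-constraint in the cited theorem. Hence the theorem applies and gives $\alpha \ge c'/(\eps\sqrt n)$ for an absolute constant $c'$; multiplying through by $n$ yields $\alpha_0 = n\alpha \ge c'\sqrt n/\eps$, which is the lemma with $c = c'$.

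There is no genuine mathematical obstacle here; the only points that require care are bookkeeping. The most important is to make sure the cited bound is the \emph{fully interactive}, approximate-DP version of Joseph et al., rather than the pure-DP / sequentially interactive versions of~\cite{beimel2008distributed,chan2012optimal}, because the reduction in \cref{sec:interactive-lb} produces LDP protocols with an a priori unbounded number of rounds. Second, one should confirm the accuracy model matches: if Theorem~5.3 is stated distributionally (as a lower bound over a family of product Bernoulli inputs), a protocol accurate on every input is in particular accurate on each such distribution, so the bound transfers to the worst-case statement we want; and the $2/3$ success probability here is simply the instantiation of the theorem's confidence parameter. Third, one should double-check that the numerical constant $1/10^5$ and the logarithmic factor $\ln(n^2/(\eps\alpha_0))$ in the displayed $\delta$-range are, after rescaling, no weaker than what Theorem~5.3 requires --- this is the single arithmetic verification in the proof, and it is routine. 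Everything else is immediate.
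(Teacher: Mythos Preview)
Your proposal is correct and matches the paper's approach exactly: the paper does not prove this lemma either, but simply cites \cite[Theorem~5.3 of arXiv v2]{JosephMNR19} and states that the substitution $\alpha=\alpha_0/n$ and $\beta=\eps\alpha_0/n$ yields the stated bound. Your write-up just fleshes out the bookkeeping around that substitution.
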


\begin{proof}[Proof of \cref{thm:interactive-lb-intro}]
We reduce from $SUM_n$ in the local model {and then invoke the lower bound for $SUM_n$ from \cref{lem:sum_lb}. Fix the settings of parameters $\eps>0, n\in\naturals,\delta\in(0,1)$ and consider any $(\eps,\delta)$-LEDP algorithm $\alg$ that estimates the number of triangles in $3n$-node graphs and achieves additive error $\alpha$ with probability at least $\frac 23$. We use $\alg$ as a black box to obtain an $(\eps,\delta)$-LDP algorithm $\cB$ for $SUM_n$ that achieves additive error $\alpha_0=\frac \alpha n$ with probability at least $\frac 23$.

Now we describe algorithm $\cB$. } Given an instance of $SUM_n$, where each local party holds one bit $X_i$ of the vector $(X_1,\dots,X_n),$ the parties implicitly create the following graph $G$. The vertex set consists of two sets of nodes, $V_1$ and $V_2$, where $V_1$ has size $n$ and $V_2$ has size $2n$. The nodes in $V_1$ do not have any secret information and can be simulated by any local party. The nodes in $V_2$ are $[2n]$,  and each party $i\in [n]$ is responsible for simulating nodes $2i-1$ and $2i$ in $V_2$. 
To create the edges of $G,$ we add edges of the complete bipartite graph between $V_1$ and $V_2$. In addition, each pair of nodes $(2i-1,2i)$ in $V_2$ has an edge between them if and only
if $x_i=1$. See~\Cref{fig:interactive-lb} for an illustration.
The parties simulate algorithm $\alg$ to obtain an estimate $\hat T$ for the number of triangles in $G$. The output of algorithm $\cB$ is then set to $\hat S=\frac{\hat T} n$.

Next, we analyze the accuracy and privacy of the reduction algorithm $\cB$.
Let $S=x_1+\ldots +x_n.$ Observe that any triangle in $G$ must have two vertices in $V_2$ and an edge between a pair of matched nodes. Any such edge contributes exactly $n$ triangles. So, the total number of triangles in $G$ is $T=Sn.$
Thus, an estimate $\hat T$ for the number of triangles in $G$ with additive error $\alpha$ results in an estimate $\hat S=\frac{\hat T} n$ with additive error $\alpha_0=\frac \alpha n$ for the original instance of $SUM_n$.
Moreover, since algorithm $\alg$ is $(\eps,\delta)$-LEDP,  algorithm $\cB$ is $(\eps,\delta)$-LDP with respect to the secret dataset $X$.

By the lower bound on the accuracy of local algorithms for $SUM_n$ stated in \cref{lem:sum_lb}, 
there exists a constant $c>0$ such that for every $\eps  \in (0,1)$,  $n \in \naturals, \alpha_0\in(0,n] $ and  $\delta \in \left[0, \frac{1}{10^5} \cdot \frac{\eps^3 \alpha_0^2}{n^3 \ln(n^2/\eps \alpha_0)}\right]$, we have $\alpha_0 \geq c\cdot\sqrt{n}/\eps$. 
Substituting, $\alpha_0=\frac \alpha n$, we get that the range of $\delta$ is $\left[0,\frac{1}{10^5}\cdot \frac{\eps^2\cdot \alpha^2}{n^5 \ln(n^3/\eps\alpha))}\right]$.
That is, there exists a constant $c>0$ such that for every $\eps  \in (0,1)$,  $n \in \naturals, \alpha_0\in(0,n^2] $ and  $\delta\in\left[0,\frac{1}{10^5}\cdot \frac{\eps^2\cdot \alpha^2}{n^5 \ln(n^3/\eps\alpha))}\right]$,
every $(\eps,\delta)$-LEPD algorithm that estimates the number of triangles in $3n$-node graphs with additive error $\alpha$ with probability at least $\frac 2 3$ must satisfy $\alpha\geq c\cdot \frac{n^{3/2}}\eps.$
Since \cref{thm:interactive-lb-intro} is stated for $n$-node graphs (as opposed to $3n$-node graphs), we can substitute $n/3$ instead of $n$ in the upper bound on the applicable values of $\delta$, increasing the constant in the upper bound to $\frac{243}{10^5}>\frac 1 {500}$.
 
\end{proof}

\begin{figure}
    \begin{center}
        \begin{tikzpicture}
	    \begin{scope}[xscale=0.8, yscale=0.6]
    \interLB
    \end{scope}
    \end{tikzpicture}
    \end{center}
    \caption{An instance of the interactive $\Omega(n^{3/2})$ lower bound consists of a complete bipartite graph with parts $V_1, V_2$ of sizes $n$ and $2n$, respectively;
  in addition, there is an edge between each pair $\{2i-1, 2i\}$ iff the secret input bit $X_i=1$.
    }
    
    \label{fig:interactive-lb}
\end{figure}

\paragraph{Acknowledgments.} T.E.\ was supported by the NSF TRIPODS program, award DMS-2022448, and Boston University.  This work was partially done while T.E.\ was affiliated with  Boston University and MIT. S.R.\ was supported in part by the NSF award DMS-2022446.
A.S.\ was supported in part by NSF awards CCF-1763786 and CNS-2120667 as well as Faculty Awards from Google and Apple.
We thank Iden Kalemaj and Satchit Sivakumar for helpful comments on the initial version of our results. 

\bibliographystyle{alpha}
\bibliography{ref}

\appendix
\iftpdp
\section{Additional Preliminaries}\label{sec:append-privacy-tools}
\else
\section{Additional Tools}\label{sec:append-privacy-tools}
\fi
\subsection{Privacy Tools}\label{sec:privacy-tools}
Differential privacy is preserved under composition and postprocessing. 

\begin{theorem}[Composition Theorem, adapted from~\cite{DMNS06,DL09,DRV10}]\label{thm:composition}
Let $\eps>0$ and $\delta\in[0,1)$.
    If algorithm $R$ runs  $(\eps,\delta)$-local randomizers $R_1, \dots, R_k$, then $R$ is a $(k\eps,k \delta)$-local randomizer.
\end{theorem}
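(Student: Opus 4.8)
The plan is to prove the theorem by induction on $k$, reducing everything to the two-fold composition of approximate differentially private mechanisms. By \cref{def:local-randomizer}, a local randomizer is nothing but an $(\eps,\delta)$-edge-DP algorithm whose input is a single adjacency vector in $\{0,1\}^n$ under the Hamming-weight-one neighbor relation, and $R$ simply outputs the concatenation of the outputs of $R_1,\dots,R_k$, where the assignment of $R_{i+1}$ and its input may depend on the outputs of $R_1,\dots,R_i$. Hence it suffices to prove the following abstract fact: if $\mathcal{M}_1$ is $(\eps_1,\delta_1)$-DP and, for every possible output $o$ of $\mathcal{M}_1$, the mechanism $\mathcal{M}_2(\cdot\,;o)$ is $(\eps_2,\delta_2)$-DP, then the composed mechanism $D\mapsto\mathcal{M}(D)$, which draws $o_1$ from $\mathcal{M}_1(D)$ and then $o_2$ from $\mathcal{M}_2(D;o_1)$ and outputs $(o_1,o_2)$, is $(\eps_1+\eps_2,\delta_1+\delta_2)$-DP with respect to the same neighbor relation. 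Applying this with $(\eps_1,\delta_1)=((k-1)\eps,(k-1)\delta)$ — the bound for the composition of $R_1,\dots,R_{k-1}$ supplied by the inductive hypothesis — and $(\eps_2,\delta_2)=(\eps,\delta)$ then gives the theorem; the base case $k=1$ is immediate.

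The workhorse for the two-fold composition is a ``test-function'' reformulation of $(\eps,\delta)$-DP that I would establish first: a mechanism $\mathcal{M}$ is $(\eps,\delta)$-DP if and only if $\mathbb{E}[g(\mathcal{M}(D))]\le e^{\eps}\,\mathbb{E}[g(\mathcal{M}(D'))]+\delta$ for all neighboring $D,D'$ and every measurable $g$ with range in $[0,1]$. The ``if'' direction follows by taking $g$ to be the indicator of an event. The ``only if'' direction follows from the layer-cake identity $\mathbb{E}[g(\mathcal{M}(D))]=\int_0^1\Pr[\mathcal{M}(D)\in\{y:g(y)>t\}]\,dt$, applying the DP inequality to each superlevel set inside the integral, and using $\int_0^1\delta\,dt=\delta$.

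With this in hand, fix neighbors $D,D'$ and an event $S$ in the joint output space, and for each first output $o$ set $S_o=\{o':(o,o')\in S\}$, so that $\Pr[\mathcal{M}(D)\in S]=\int \Pr[\mathcal{M}_2(D;o)\in S_o]\,dP_D(o)$, where $P_D$ is the law of $\mathcal{M}_1(D)$. The key move is to define $f(o)=\min\{\Pr[\mathcal{M}_2(D;o)\in S_o],\ e^{\eps_2}\Pr[\mathcal{M}_2(D';o)\in S_o]\}\in[0,1]$; the $(\eps_2,\delta_2)$-DP of $\mathcal{M}_2(\cdot\,;o)$ gives $\Pr[\mathcal{M}_2(D;o)\in S_o]\le f(o)+\delta_2$, hence $\Pr[\mathcal{M}(D)\in S]\le \int f\,dP_D+\delta_2$. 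Applying the test-function reformulation to $\mathcal{M}_1$ with test function $f$ gives $\int f\,dP_D\le e^{\eps_1}\int f\,dP_{D'}+\delta_1$, and the bound $f(o)\le e^{\eps_2}\Pr[\mathcal{M}_2(D';o)\in S_o]$ gives $\int f\,dP_{D'}\le e^{\eps_2}\Pr[\mathcal{M}(D')\in S]$. Chaining these three inequalities yields $\Pr[\mathcal{M}(D)\in S]\le e^{\eps_1+\eps_2}\Pr[\mathcal{M}(D')\in S]+\delta_1+\delta_2$, completing the induction step. The one subtle point — and the step I expect to require the most care — is precisely the use of the $\min$ in the definition of $f$: it is exactly what lets the second mechanism's $\delta_2$ come out of the layer-cake bound as an additive term rather than getting multiplied by $e^{\eps_1}$, which is what makes the $\delta$'s add instead of blowing up. Everything else is routine, and the adaptive dependence of $\mathcal{M}_2$ on $o$ causes no difficulty since every inequality above is applied for $o$ held fixed.
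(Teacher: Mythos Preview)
Your argument is correct and is essentially the standard proof of basic (adaptive) composition for $(\eps,\delta)$-DP: the test-function reformulation via the layer-cake formula, the truncation $f(o)=\min\{\Pr[\mathcal{M}_2(D;o)\in S_o],\,e^{\eps_2}\Pr[\mathcal{M}_2(D';o)\in S_o]\}$ to prevent $\delta_2$ from being amplified by $e^{\eps_1}$, and the induction on $k$ are all sound. Note, however, that the paper does not actually prove \cref{thm:composition}; it is stated in the appendix as a known tool with citations to \cite{DMNS06,DL09,DRV10}, so there is no paper proof to compare against. Your write-up supplies exactly the argument those references contain, specialized to local randomizers.
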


The next theorem states that the result of postprocessing  the output
of DP algorithm is DP. 
\begin{theorem}[Postprocessing~\cite{DMNS06,BS16}]\label{thm:post-processing}
Let $\alg$ be an $(\eps,\delta)$-DP algorithm and $\cP$ be an arbitrary (randomized)
mapping from $\range(\alg)$ to an arbitrary
set. The algorithm that runs $\cP$ on the output of $\alg$ is $(\eps,\delta)$-DP.
\end{theorem}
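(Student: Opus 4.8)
The plan is to reduce the randomized-postprocessing case to the deterministic one and then invoke the definition of differential privacy directly. First I would treat a deterministic map $f$ from $\range(\alg)$ to some set $\mathcal{Z}$. Fix neighboring datasets $X,X'$ and an arbitrary event $S$ in the output space $\mathcal{Z}$. The preimage $T=\{y\in\range(\alg): f(y)\in S\}$ is an event in $\range(\alg)$, so the hypothesis that $\alg$ is $(\eps,\delta)$-DP applied to $T$ gives
$$\Pr[f(\alg(X))\in S]=\Pr[\alg(X)\in T]\le e^{\eps}\Pr[\alg(X')\in T]+\delta=e^{\eps}\Pr[f(\alg(X'))\in S]+\delta,$$
which is exactly the $(\eps,\delta)$-DP guarantee for the composed algorithm $f\circ\alg$. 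Since $S$ and the neighboring pair were arbitrary, this handles all deterministic postprocessing.

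Next I would handle a general randomized map $\cP$ by writing it as a mixture of deterministic maps: $\cP$ first samples internal randomness $r$ from some distribution $\mu$ (independent of the coins of $\alg$ and of the dataset) and then applies a deterministic function $f_r$ to its input. Then for every event $S$ and every neighboring pair $X,X'$,
$$\Pr[\cP(\alg(X))\in S]=\E_{r\sim\mu}\big[\Pr[f_r(\alg(X))\in S]\big]\le \E_{r\sim\mu}\big[e^{\eps}\Pr[f_r(\alg(X'))\in S]+\delta\big]=e^{\eps}\Pr[\cP(\alg(X'))\in S]+\delta,$$
where the inequality applies the deterministic case inside the expectation and uses monotonicity and linearity of expectation. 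This establishes that $\cP$ run on the output of $\alg$ is $(\eps,\delta)$-DP.

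The step I expect to be the only genuine obstacle — and the only place a careful reader would want more than the above — is the measure-theoretic bookkeeping when $\range(\alg)$ or the codomain of $\cP$ is not discrete: one must know that each $f_r$ is measurable (so $f_r^{-1}(S)$ really is an admissible event) and that $r\mapsto\Pr[f_r(\alg(X))\in S]$ is measurable (so the expectation is well defined). Both hold under the standard convention that all spaces are standard Borel and $\cP$ is specified by a Markov kernel, and at the level of generality used in this paper it is cleanest to state the result as above and defer the kernel formalism to the cited sources~\cite{DMNS06,BS16}. Apart from this, no new ideas are required; the proof is a direct unwinding of \cref{def:dif-privacy}.
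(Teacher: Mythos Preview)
Your argument is correct and is exactly the standard proof of the post-processing theorem. However, the paper does not actually prove \cref{thm:post-processing}: it is stated in the appendix as a known tool with citations to \cite{DMNS06,BS16} and no proof is given, so there is nothing in the paper to compare your approach against.
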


For completeness, we state the definition of global sensitivity used in our discussion of the central model.
\begin{definition}[Global Sensitivity~\cite{DMNS06}]\label{def:global-sensitivity}
Let $d\in\mathbb{N}$ and $f: \domain \rightarrow \reals^d$ be a function on some domain~$\domain$. The 
    \defn{$\ell_1$-{sensitivity}} of 
    $f$ is $\df = \max_{x, x'} \norm{f(x) 
    - f(x')}_1$, where the maximum is over all neighboring $x, x' \in \domain$, for 
    some appropriate definition of 
    neighboring.
\end{definition}

\subsection{Concentration Bounds}\label{sec:concentration-bounds}
\begin{theorem}[Chernoff Bound~\cite{chernoff1952measure}]\label{prel:chernoff}
Let $X_1, \ldots, X_n$ be independent random variables in 
$\{0,1\}.$ Let $X=\sum_{i=1}^n X_i$ and $\mu=\E[X]$. Then for any $\delta>0$, 
\[
\Pr\left[X<(1-\delta)\mu\right]\leq \exp\left(-\frac{\delta^2\cdot \mu}{2} \right).
\]

\end{theorem}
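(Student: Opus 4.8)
The plan is to apply the standard exponential-moment (Chernoff) argument to the lower tail. First I would dispose of the degenerate cases: if $\mu=0$ then $X=0$ almost surely and the left-hand side is $0$, while if $\delta\ge 1$ then $(1-\delta)\mu\le 0\le X$, so again $\Pr[X<(1-\delta)\mu]=0$; in both cases the inequality holds trivially. So assume $\delta\in(0,1)$ and $\mu>0$. Fix a parameter $t>0$, to be chosen later. Since $x\mapsto e^{-tx}$ is strictly decreasing, the event $\{X<(1-\delta)\mu\}$ equals $\{e^{-tX}>e^{-t(1-\delta)\mu}\}$, so Markov's inequality applied to the nonnegative variable $e^{-tX}$ gives
\[
\Pr\left[X<(1-\delta)\mu\right]\;\le\; e^{t(1-\delta)\mu}\,\E\!\left[e^{-tX}\right].
\]

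Next I would bound the moment generating function using independence. Writing $p_i=\E[X_i]\in[0,1]$, we have $\E[e^{-tX_i}]=1+p_i(e^{-t}-1)\le \exp\!\bigl(p_i(e^{-t}-1)\bigr)$ by the inequality $1+x\le e^x$, and hence, by independence of the $X_i$ and $\sum_i p_i=\mu$,
\[
\E\!\left[e^{-tX}\right]=\prod_{i=1}^n \E\!\left[e^{-tX_i}\right]\le \exp\!\bigl(\mu(e^{-t}-1)\bigr).
\]
Plugging this into the previous display yields $\Pr\left[X<(1-\delta)\mu\right]\le \exp\!\bigl(\mu\,g(t)\bigr)$ for every $t>0$, where $g(t)=e^{-t}-1+t(1-\delta)$.

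Then I would optimize over $t$: the derivative $g'(t)=-e^{-t}+(1-\delta)$ vanishes at $e^{-t}=1-\delta$, i.e.\ at $t=\ln\frac{1}{1-\delta}>0$ (this is the one place where $\delta<1$ is used), and substituting gives $g(t)=-\delta-(1-\delta)\ln(1-\delta)$. Therefore
\[
\Pr\left[X<(1-\delta)\mu\right]\;\le\; \exp\!\Bigl(-\mu\bigl(\delta+(1-\delta)\ln(1-\delta)\bigr)\Bigr).
\]
The proof is completed by the elementary scalar inequality $\delta+(1-\delta)\ln(1-\delta)\ge \tfrac{\delta^2}{2}$ for $\delta\in[0,1)$, which upgrades the exponent to $-\delta^2\mu/2$, as claimed.

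The only genuinely nontrivial ingredient is that last scalar inequality, so I expect it to be the main obstacle in writing a clean argument. I would prove it by setting $f(\delta)=\delta+(1-\delta)\ln(1-\delta)-\tfrac{\delta^2}{2}$ and computing $f(0)=0$, $f'(\delta)=-\ln(1-\delta)-\delta$ so that $f'(0)=0$, and $f''(\delta)=\tfrac{\delta}{1-\delta}\ge 0$ on $[0,1)$; thus $f'$ is nondecreasing and nonnegative, hence $f$ is nondecreasing, hence $f(\delta)\ge f(0)=0$ throughout $[0,1)$. Everything else is routine bookkeeping with moment generating functions.
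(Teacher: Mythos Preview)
Your proof is correct and is the standard exponential-moment argument for the lower-tail Chernoff bound. The paper does not actually prove this theorem at all: it is stated in the appendix as a background tool with a citation to Chernoff's original paper, so there is no in-paper proof to compare against.
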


\begin{theorem}[Paley–Zygmund Inequality \cite{paley1932note}]\label{prel:paley-zygmund}
If $Z\geq 0$ is a random variable with finite variance, then, for all $0\leq \theta\leq 1,$
\[
\Pr[Z>\theta \cdot \expect[Z]]\geq (1-\theta^2)\frac{\expect[Z]^2}{\expect[Z^2]}.
\]
\end{theorem}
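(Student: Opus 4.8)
The plan is to prove the Paley--Zygmund inequality by the classical one-line split of the first moment followed by Cauchy--Schwarz; this yields the bound with the constant $(1-\theta)^2$, which is precisely the form that is actually invoked in the proof of \cref{lem:anti-concentration} (applied there to $\RV^2$), so it suffices for all uses in the paper. First I would dispose of the degenerate cases: since $Z\geq 0$ has finite variance, $\E[Z]$ and $\E[Z^2]$ are finite, and if $\E[Z^2]=0$ then $Z=0$ almost surely and there is nothing to prove; likewise $\theta=1$ makes the right-hand side $0$. So assume $\E[Z^2]>0$ (equivalently $\E[Z]>0$) and $\theta\in[0,1)$, and write $p\define\Pr\!\big[Z>\theta\,\E[Z]\big]$.

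The core computation decomposes $\E[Z]$ according to whether $Z$ crosses the threshold $\theta\,\E[Z]$:
\[
\E[Z] \;=\; \E\!\big[Z\cdot\indicator_{\{Z\leq \theta\E[Z]\}}\big] \;+\; \E\!\big[Z\cdot\indicator_{\{Z>\theta\E[Z]\}}\big].
\]
On the event in the first term one has $0\leq Z\leq \theta\,\E[Z]$, so that term is at most $\theta\,\E[Z]$. For the second term I apply Cauchy--Schwarz (legitimate since $\E[Z^2]<\infty$):
\[
\E\!\big[Z\cdot\indicator_{\{Z>\theta\E[Z]\}}\big] \;\leq\; \sqrt{\E[Z^2]}\cdot\sqrt{\E\!\big[\indicator_{\{Z>\theta\E[Z]\}}\big]} \;=\; \sqrt{\E[Z^2]\cdot p}.
\]
Combining the two bounds gives $\E[Z]\leq \theta\,\E[Z]+\sqrt{\E[Z^2]\cdot p}$, hence $(1-\theta)\,\E[Z]\leq\sqrt{\E[Z^2]\cdot p}$; squaring both (nonnegative) sides and dividing by $\E[Z^2]$ yields $p\geq (1-\theta)^2\,\E[Z]^2/\E[Z^2]$, which is the claimed bound.

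I do not expect any genuine obstacle here: the whole argument is two lines once the threshold decomposition is written down. The only care needed is bookkeeping---checking finiteness of the second moment so that Cauchy--Schwarz applies, handling the trivial case $\E[Z^2]=0$, and noting that the event in the indicator is exactly $\{Z>\theta\,\E[Z]\}$ so the squared indicator equals the indicator. As a sanity check on the constant, one can verify with a two-point distribution that $(1-\theta)^2$ cannot be improved, so this is the sharp form of the inequality.
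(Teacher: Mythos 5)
Your argument is the standard threshold-split-plus-Cauchy--Schwarz proof of Paley--Zygmund and it is correct; the paper itself gives no proof of this theorem (it is quoted as a classical result in the appendix), so there is no technique to compare against. The one substantive issue is the constant. As printed, the theorem claims the factor $(1-\theta^2)$, whereas your proof yields $(1-\theta)^2$; these differ for $\theta\in(0,1)$, and the printed form is in fact false. For example, let $Z=10$ with probability $0.01$ and $Z=1$ otherwise, and take $\theta=0.92$: then $\E[Z]=1.09$, $\theta\E[Z]\approx 1.003$, so $\Pr[Z>\theta\E[Z]]=0.01$, while $(1-\theta^2)\,\E[Z]^2/\E[Z^2]\approx 0.09$. (Your own two-point sharpness check makes the same point: no constant larger than $(1-\theta)^2$ can work, and $1-\theta^2>(1-\theta)^2$ on $(0,1)$.) So you were right to prove the $(1-\theta)^2$ version --- that is the classical inequality and exactly what the proof of \cref{lem:anti-concentration} invokes, with $\theta=1/4$ producing the factor $9/16$ there. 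In short: your proof is complete and suffices for every use in the paper, but it proves the corrected statement, and the appendix statement should be amended from $(1-\theta^2)$ to $(1-\theta)^2$.
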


\end{document}